\newif\ifisarxiv
\newif\ifisshort
    \newcommand{\fullver}[1]{}
    \newcommand{\confver}[1]{#1}
    \newcommand{\fullver}[1]{#1}
    \newcommand{\confver}[1]{}
    \newcommand{\nonarxivver}[1]{}
    \newcommand{\arxivver}[1]{#1}
    \newcommand{\nonarxivver}[1]{#1}
    \newcommand{\arxivver}[1]{}
\author{
    Komal Dhull \\
    Carnegie Mellon University \\
    \texttt{dhullkom@gmail.com} \\
    \and
    Steven Jecmen \\
    Carnegie Mellon University \\
    \texttt{sjecmen@cs.cmu.edu} \\
    \and
    Pravesh Kothari \\
    Carnegie Mellon University \\
    \texttt{praveshk@andrew.cmu.edu} \\
    \and
    Nihar B. Shah \\
    Carnegie Mellon University \\
    \texttt{nihars@cs.cmu.edu} \\
}
\date{}
\date{}
\author {
    Komal Dhull,
    Steven Jecmen,
    Pravesh Kothari,
    Nihar B. Shah
}
\newtheorem{lemma}{Lemma}
\newtheorem{theorem}{Theorem}
\newtheorem{definition}{Definition}
\newtheorem{proposition}{Proposition}
\newcommand{\assigndeg}{k}
\newcommand{\reviewers}{\mathcal{A}}
\newcommand{\papers}{\mathcal{P}}
\newcommand{\similarity}{s} 
\newcommand{\simwhole}{S}
\newcommand{\assignment}{\mathcal{M}}
\newcommand{\opt}{\textsf{Opt}}
\newcommand{\dirsimilarity}{G_{\optassign}}
\newcommand{\numrev}{n}
\newcommand{\numpap}{n}
\newcommand{\adrev}{a}
\newcommand{\adpap}{p}
\newcommand{\optassign}{\assignment^*}
\newcommand{\strategyproof}{strategyproof-via-partitioning}
\newcommand{\authorship}{\mathcal{U}}
\DeclareMathOperator*{\argmax}{argmax}
\renewcommand*{\@fnsymbol}[1]{\ifcase#1\or \else\@arabic{\numexpr#1-1\relax}\fi}
\title{Strategyproofing Peer Assessment via Partitioning: \\ The Price in Terms of Evaluators' Expertise\arxivver{\thanks{The author ordering is alphabetical due to coincidence and not by policy~\cite{blog2019alphabetical}.}}}
\begin{document}

\maketitle

\begin{abstract} 
Strategic behavior is a fundamental problem in a variety of real-world applications that require some form of peer assessment, such as peer grading of homeworks, grant proposal review, conference peer review of scientific papers, and peer assessment of employees in organizations. Since an individual's own work is in competition with the submissions they are evaluating, they may provide dishonest evaluations to increase the relative standing of their own submission. This issue is typically addressed by partitioning the individuals and assigning them to evaluate the work of only those from different subsets. Although this method ensures strategyproofness, each submission may require a different type of expertise for effective evaluation. In this paper, we focus on finding an assignment of evaluators to submissions that maximizes assigned evaluators' expertise subject to the constraint of strategyproofness. We analyze the price of strategyproofness: that is, the amount of compromise on the assigned evaluators' expertise required in order to get strategyproofness.  We establish several polynomial-time algorithms for strategyproof assignment along with assignment-quality guarantees. Finally, we evaluate the methods on a dataset from conference peer review. 
\end{abstract}

\section{Introduction}

Many applications require evaluation of certain submissions. When the number of submissions is large enough to make independent expert evaluations of all of them infeasible, the individuals who submitted are each asked to evaluate submissions made by their peers.  
In education, peer grading of homeworks has become increasingly prevalent in Massive Open Online Courses (MOOCs)~\cite{shah2013case,diez2013peer,piech2013tuned} and conventional classrooms. In scientific research, peer review is used for grant proposals and conference paper submissions~\cite{shah2017design,tomkins2017reviewer,shah2021survey}. In the workplace, peer evaluation is frequently used to assess employee performance and determine employee promotions and bonuses~\cite{wexley1984perf,Fiore_Souza_2021}.

In many of these applications, peer assessment is \emph{competitive}, meaning that the eventual outcome of a submission is impacted by the evaluations of other submissions. Examples include a class graded on a curve such that only a certain percentage receives an `A' grade, a conference that intends to accept some fixed fraction of the papers, an agency awarding grants under a certain budget, or a company with a limited number of promotions on offer. 

A key challenge in competitive peer assessment is that agents behave \emph{strategically}: an agent may give low scores to the submissions they evaluate, in the hope that by hurting the chances of those submissions, they increase the relative chance of a good outcome for their own submission. A controlled experiment~\cite{balietti2016peer} found that people indeed behave in such a strategic manner in competitive peer assessment. 
Furthermore, the work~\cite{thurner2011peer} shows that even a small fraction of agents behaving strategically in scientific peer review can significantly lower the average quality of the accepted papers. It is thus vital to ensure the fairness and integrity of the process by developing mechanisms to prevent such strategic behavior. 
In fact, the NSF briefly experimented with a method (introduced by~\cite{merrifield2009telescope}) that attempts to prevent strategic behavior in the peer review of research proposals~\cite{naghizadeh2013incentive}, but this method does not come with theoretical guarantees.

By far the most well-studied way of ensuring strategyproofness is the partitioning method introduced in~\cite{alon2011sum} and studied further in~\fullver{\cite{holzman2013impartial,bousquet2014near,fischer2015optimal,aziz2016strategyproof,aziz2019strategyproof,mattei2020peernomination,kahng2018ranking,xu2018strategyproof}}\confver{\cite{holzman2013impartial,bousquet2014near,fischer2015optimal,xu2018strategyproof,aziz2019strategyproof}}. 
Under the partitioning method, submissions are partitioned into some number of subsets, and no agent is assigned a submission from the same subset as their own. 
The individual agent evaluations are then aggregated separately for each subset, so that any agent's evaluations cannot influence the final outcome for their own submission.

Apart from strategyproofness, another key aspect in assigning evaluators to submissions is matching based on expertise. For instance, in peer review of papers or proposals, not all agents have expertise for all papers or proposals. 
Similarly, in peer assessment within an organization, the peer assessors for any employee must be chosen to have a suitable understanding of that employee's work. In peer grading of essays or projects, the assessors must have the relevant background to do a suitable evaluation. Since the goal of peer assessment is to evaluate each submission as competently as possible, it is important to ensure that each submission is assigned evaluators with suitable expertise, or in other words, to maximize the quality of the assignment of evaluators to submissions. 

As both strategyproofness and assignment quality are crucial in many applications, \emph{our work studies the problem of finding a strategyproof assignment with maximum assignment quality}. The key question we ask is: \emph{what is the price paid by strategyproofing in terms of the assigned evaluators' expertise?} As a metric of evaluation, we use the ratio of the quality we obtain with strategyproofness to the maximum quality achievable without the strategyproofness constraint. 

Our work contributes to the body of literature on analyzing the price of strategyproofness in various settings~\cite{procaccia2013approximate,dughmi2010truthful,koutsoupias2014scheduling,ashlagi2015mix,kahng2018ranking}. This includes a line of work on impartial peer nomination/selection~\cite{alon2011sum,bousquet2014near,holzman2013impartial,aziz2016strategyproof,kurokawa2015impartial,fischer2015optimal,aziz2019strategyproof,mattei2020peernomination}, which focuses on selecting the best $k$ submissions in a strategyproof manner given an profile of evaluations. In contrast, we optimize the {\it assignment} of evaluators to submissions subject to a strategyproofness constraint and characterize the price of strategyproofness in terms of the assigned evaluators' expertise.  Further, our setting generalizes the standard peer selection setting, since evaluations may be used for various relative grading schemes other than best-$k$ selection. 
The prior work closest to ours is~\cite{xu2018strategyproof}, which considers the  partitioning mechanism specifically for conference peer review. They provide an algorithm that utilizes partitioning and conduct empirical analysis on its quality. However, their algorithm is designed to guarantee that the output ranking of submissions satisfies an efficiency property and does not focus on optimizing the evaluator assignment.

With that background, we now list {\bf our main contributions}:
\ifisshort \else \begin{enumerate} \fi 
    \fullver{\item}\confver{(1)} We establish fundamental limits on the amount of compromise that must be made on the assignment quality in order to impose strategyproofness via partitioning.
    \fullver{\item}\confver{(2)} We present polynomial-time computable algorithms that are optimal in the worst case.
    \fullver{\item}\confver{(3)} We show that the problem of instance-wise optimal strategyproof assignment via partitioning is NP-hard.
    \fullver{\item}\confver{(4)} We conduct experimental evaluations on data from the peer-review process of the ICLR 2018 conference, where we find that our algorithms achieve high-expertise assignments while producing fair partitions\fullver{ of papers}. 
\ifisshort \else \end{enumerate} \fi

\fullver{ We accomplish these goals using various techniques: applying combinatorial methods,  drawing a connection to equitable graph coloring, and formulating our problem as a max-cut problem. }

Apart from considering strategyproofness and assignment quality together, we note two points of contrast of our work as compared to the literature. First, previous works on strategyproof partitioning consider a uniform random partition in order to ensure fairness: that is, to ensure that no partition contains disproportionately strong or disproportionately weak submissions. In our work, we analyze the random partition approach and use it as a baseline for the rest of our results. Moreover, we conduct experiments using data from the ICLR conference, which reveal that the non-random partition output by our algorithms does not result in any substantial unfairness. Second, the  work~\cite{xu2018strategyproof}, which deals with both assignment quality and strategyproofing, considers arbitrary authorships where each submission may have multiple authors and each agent may have authored multiple submissions. In contrast, our theoretical analysis restricts attention to each agent having authored one submission and each submission being authored by one agent. 
Such one-to-one authorship occurs often in peer grading, peer assessment of employees, or peer review of certain proposals, and is equivalent to common settings in the strategyproofing literature~\cite{alon2011sum,bousquet2014near,holzman2013impartial,aziz2016strategyproof,kurokawa2015impartial,fischer2015optimal,aziz2019strategyproof,mattei2020peernomination,kahng2018ranking}. 
In Section~\ref{sec:gen_auth}, we further provide an extension and empirical evaluation that handles arbitrary authorships.

\arxivver{All of the code for our algorithms and our empirical results is freely available online at \url{https://github.com/sjecmen/optimal_strategyproof_assignment}.}
\nonarxivver{The full version of the paper can be found online,\footnote{\url{https://arxiv.org/abs/2201.10631}} as can all of the code for our algorithms and our empirical results.\footnote{\url{https://github.com/sjecmen/optimal_strategyproof_assignment}}}

\section{Background and Problem Formulation} \label{sec:prob}
We consider a setting of peer assessment between agents, where each agent first submits some work for evaluation and  is then assigned to evaluate other agents' submissions. After evaluations have been completed, submissions can be compared based on the evaluation scores in order to determine any competitive outcomes, such as relative grades (in a classroom setting), accept/reject decisions (in conference peer review), or employee bonuses and promotions (in an organization).

\subsection{Preliminaries} 
Let $\reviewers = \{\adrev_1, \dots, \adrev_\numrev\}$ be the set of agents and let $\papers = \{\adpap_1, \dots, \adpap_\numpap\}$ be the set of submissions from the agents. We assume that each agent $\adrev_i$ ($i \in [\numrev]$) authors exactly one submission $\adpap_i$. \fullver{(This is equivalent to common settings in the strategyproofing literature~\cite{holzman2013impartial,bousquet2014near,fischer2015optimal,aziz2016strategyproof,kahng2018ranking}. Furthermore, we handle arbitrary authorships in Section~\ref{sec:gen_auth}.) }

A key focus of our work is the assignment of agents to submissions for review. Constructing a high-quality assignment for peer assessment (in the absence of strategyproofing requirements) is a well-studied problem\fullver{, and is conducted in two phases. The first phase involves}\confver{. This involves first} computing a ``similarity'' between every agent-submission pair, a number between 0 and 1 where a higher value indicates a better match in terms of expertise. Similarities are computed in various ways~\fullver{\cite{charlin13tpms,mimno2007expertise,fiez2020super,meir2020market}}\confver{\cite{charlin13tpms,mimno2007expertise,fiez2020super}}. Our work is agnostic to the method used to compute similarity scores. We assume we are given a matrix $\simwhole \in [0, 1]^{\numrev \times \numrev}$ of `similarity scores' for each agent-submission pair that capture the expertise of each agent to evaluate each submission. For any $i \in [\numrev], j \in [\numpap]$, the $(i,j)^\text{th}$ entry of matrix $\simwhole$, denoted by $\similarity_{i,j}$, represents the similarity between 
agent $\adrev_i$ and submission $\adpap_j$, where a higher value means that one expects a better quality of evaluation.

\subsection{Assignments}

\fullver{The second phase of the assignment process then uses the similarities to assign submissions to agents. }For a predefined value $\assigndeg \in \mathbb{Z}_+$, an assignment with loads of $\assigndeg$ is defined as a set $\assignment \subseteq \reviewers \times \papers$ of assigned agent-submission pairs where each submission is assigned exactly $\assigndeg$ agents, each agent is assigned to exactly $\assigndeg$ submissions, and no agent is assigned to their own submission. It is important to note that in our applications of interest, the ``load'' $\assigndeg$ is typically a small constant independent of $\numrev$, and we will assume so throughout this paper. 

The assignment is chosen by maximizing a specified objective subject to the load constraints. By far the most common choice of objective is to maximize the sum of the assigned similarities~\confver{\cite{goldsmith2007ai,taylor2008optimal,charlin13tpms}}\fullver{\cite{goldsmith2007ai,taylor2008optimal,tang2010expert,charlin13tpms,charlin2011framework,li2016new}}, 
and this approach is widely used in practice: for instance, in IJCAI, NeurIPS, AAAI and other conferences. Formally, for any assignment $\assignment \subseteq \reviewers \times \papers$, the total similarity is given by 
    $\sum_{(\adrev_i, \adpap_j) \in \assignment} \similarity_{i,j}$.  
Fixing some $\assigndeg$, define $\optassign_\simwhole$ as the maximum-similarity assignment 
\begin{subequations}
\label{EqnSumSim}
\begin{align}
    \optassign_\simwhole = \argmax_{\assignment \subseteq \reviewers \times \papers} &\sum_{(\adrev_i, \adpap_j) \in \assignment} \similarity_{i,j} \\
    \text{subject to } &\sum_{\adrev_i \in \reviewers} \mathbb{I}[(\adrev_i, \adpap_j) \in \assignment] = \assigndeg && \forall \adpap_j \in \papers \label{eq:lp_first} \\
    &\sum_{\adpap_j \in \papers} \mathbb{I}[(\adrev_i, \adpap_j) \in \assignment] = \assigndeg && \forall \adrev_i \in \reviewers \\
    &(\adrev_i, \adpap_i) \not\in \assignment && \forall \adrev_i \in \reviewers \label{eq:lp_last}.
\end{align}
\end{subequations}
The optimal assignment (without strategyproofness) $\optassign_\simwhole$ can be found efficiently via standard methods such as min-cost flow algorithms or linear programming. 
Let $\opt_\simwhole$ be the similarity of $\optassign_\simwhole$ (leaving dependence on $\assigndeg$ implicit in the notation); that is, $\opt_\simwhole$ is the maximum value of the aforementioned objective under the stated constraints. When unambiguous, the subscript $\simwhole$ may be omitted. 

While we consider the aforementioned popular objective in most of our analysis, we note that another objective that is sometimes used is the leximin or max-min fairness of the assignment~\confver{\cite{garg2010assigning,stelmakh2018assignment,kobren19localfairness}}\fullver{\cite{garg2010assigning,stelmakh2018assignment}}, which we examine in \confver{Appendix}\fullver{Section}~\ref{sec:minimax}.

\subsection{Strategyproofness via Partitioning}
Our goal in this paper is to find maximum-similarity {\it strategyproof} assignments. A strategyproof assignment is one in which no agent can improve the outcome of their own submission by changing the evaluation they provide. 

As introduced earlier, a standard method for constructing strategyproof assignments begins by partitioning the agents into two subsets. 
An assignment of agents to submissions is then found, where agents can only be assigned to submissions authored by agents in the other subset. After evaluations are completed, any relative grading (e.g., classroom grading or accept/reject decisions) is done independently within each subset. Thus, the evaluation provided by any agent cannot influence the final outcome of their own submission.

In this paper, we use the term ``\strategyproof{}'' specifically to describe assignments produced in this way.
\begin{definition} \label{def:sp}
An assignment $\assignment$ is {\bf \strategyproof{}} if there exists a partition of $\reviewers$ into two subsets $\reviewers_1, \reviewers_2$ such that 
\begin{subequations}
\begin{align}
    &(\adrev_i, \adpap_j) \not\in \assignment \qquad \forall \adrev_i, \adrev_j \in \reviewers_t; \forall t \in \{1, 2\} \label{eq:sp_const} \\
    &\reviewers_1 \cup \reviewers_2 = \reviewers; \quad \reviewers_1 \cap \reviewers_2 = \emptyset \label{eq:sp_last}.
\end{align}
\end{subequations}
\end{definition}
In Section~\ref{sec:multipart}, we extend this definition to \fullver{allow for partitioning into }more than two subsets. Our goal is to find a maximum-similarity \strategyproof{} assignment
\begin{align*}
    &\argmax_{\reviewers_1, \reviewers_2 \subseteq \reviewers; \assignment \subseteq \reviewers \times \papers} \sum_{(\adrev_i, \adpap_j) \in \assignment} \similarity_{i,j} \\
    &\text{subject to } \eqref{eq:lp_first} - \eqref{eq:lp_last}, \eqref{eq:sp_const}, \eqref{eq:sp_last}.
\end{align*}
If an assignment satisfies~\eqref{eq:sp_const} for some partition, we say that assignment ``respects'' the partition; we say that a pair $(\adrev_i, \adpap_j)$ respects the partition if $\adrev_i$ and $\adrev_j$ are in different subsets. Note that the load constraints imply $|\reviewers_1| = |\reviewers_2|$ for any feasible solution, so we assume that $\numrev$ is even in all of our results; we also assume that $\assigndeg \leq \frac{\numrev}{2}$ for feasibility.

Given a partition $(\reviewers_1, \reviewers_2)$, finding the maximum-similarity assignment can be done via standard methods by additionally disallowing any pairs violating constraint~\eqref{eq:sp_const}. Thus, the primary question we consider in this paper is how to optimally choose the partition in order to maximize the similarity of the resulting assignment.

\subsection{Evaluation Metric}
We evaluate a \strategyproof{} assignment algorithm in terms of the ratio between the similarity of the assignment it produces and $\opt_\simwhole$, the similarity of the optimal non-strategyproof assignment. Specifically, consider any assignment algorithm that, given input similarities $\simwhole$, produces a \strategyproof{} assignment denoted by $\assignment_\simwhole$. We evaluate its performance in terms of the worst-case input similarities as:
\begin{align*}
    \min_{\simwhole : \opt_\simwhole > 0} \frac{\sum_{(\adrev_i, \adpap_j) \in \assignment_\simwhole} \similarity_{i,j} }{\opt_\simwhole}.
\end{align*}

\section{Theoretical Results}
In this section, we present our main theoretical results.

\subsection{Baseline: Random Partitioning} 
We begin with a result that provides a simple baseline for comparison:  Algorithm~\ref{algo:rand} chooses a partition uniformly at random. This is the approach taken by most prior literature on partitioning-based mechanisms. 
It is easy to show that such a uniformly random partition can attain at least half of the optimal similarity.

\begin{algorithm}[t!]
\caption{Random Partition}
\label{algo:rand}
\begin{algorithmic}[1]
\Require $\reviewers$, $\papers$, $\simwhole$, $\assigndeg$
\State Sample $\reviewers_1$ uniformly at random from $\{ \reviewers' :  \reviewers' \subseteq \reviewers, |\reviewers'| = |\reviewers| /2 \}$
\State $\reviewers_2 \gets \reviewers \setminus \reviewers_1$
\State $\assignment \gets $ max-similarity assignment with loads $\assigndeg$ respecting $(\reviewers_1, \reviewers_2)$
\State \Return assignment $\assignment$ and partition $(\reviewers_1, \reviewers_2)$
\end{algorithmic}
\end{algorithm}

\begin{proposition}\label{prop:rand}
For any $\assigndeg$ and any $\simwhole$, Algorithm~\ref{algo:rand} finds a \strategyproof{} assignment with similarity at least $\frac{1}{2} \opt_\simwhole$ in expectation.
\end{proposition}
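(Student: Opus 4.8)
The plan is to show that a uniformly random partition, in expectation, retains at least half the edges (weighted by similarity) of the optimal unconstrained assignment $\optassign_\simwhole$, and hence the max-similarity assignment respecting that partition has similarity at least $\frac12 \opt_\simwhole$ in expectation. First I would fix the optimal assignment $\optassign_\simwhole$ and consider, for each pair $(\adrev_i, \adpap_j) \in \optassign_\simwhole$, the event that this pair ``respects'' the random partition, i.e., that $\adrev_i$ and $\adrev_j$ land in different subsets. The key calculation is that for any fixed distinct indices $i,j$, under a uniformly random balanced partition of a set of size $\numrev$, the probability that $i$ and $j$ are separated is $\frac{\numrev/2}{\numrev-1} \ge \frac12$. (Condition on where $\adrev_i$ goes; then $\adrev_j$ occupies one of the remaining $\numrev-1$ slots uniformly, of which $\numrev/2$ are on the opposite side.) Since every pair in $\optassign_\simwhole$ has $i \ne j$ (no agent reviews their own submission, by constraint~\eqref{eq:lp_last}), this bound applies to all of them.

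Next I would define the random sub-assignment $\assignment' := \{(\adrev_i,\adpap_j) \in \optassign_\simwhole : \adrev_i, \adrev_j \text{ on opposite sides of } (\reviewers_1,\reviewers_2)\}$, which by construction respects the partition $(\reviewers_1,\reviewers_2)$ and trivially satisfies the load upper bounds (each agent/submission appears at most $\assigndeg$ times, since $\optassign_\simwhole$ does). By linearity of expectation, $\mathbb{E}\big[\sum_{(\adrev_i,\adpap_j) \in \assignment'} \similarity_{i,j}\big] = \sum_{(\adrev_i,\adpap_j) \in \optassign_\simwhole} \similarity_{i,j} \cdot \Pr[(\adrev_i,\adpap_j) \text{ respects the partition}] \ge \frac12 \opt_\simwhole$. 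Finally, since Algorithm~\ref{algo:rand} returns the \emph{maximum}-similarity assignment with loads exactly $\assigndeg$ respecting the realized partition, and $\assignment'$ is a feasible sub-assignment respecting that partition, I would note that $\assignment'$ can be completed to a full assignment of loads $\assigndeg$ respecting the partition without decreasing similarity (similarities are nonnegative), so the similarity of Algorithm~\ref{algo:rand}'s output is at least that of $\assignment'$ pointwise (for each realization of the partition), and hence at least $\frac12 \opt_\simwhole$ in expectation.

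The one point requiring a little care — the main obstacle, such as it is — is the feasibility argument in the last step: one must check that on a balanced partition $(\reviewers_1,\reviewers_2)$ with $|\reviewers_1| = |\reviewers_2| = \numrev/2$, a full assignment with loads exactly $\assigndeg$ respecting the partition always exists (so the ``max-similarity assignment respecting $(\reviewers_1,\reviewers_2)$'' is well-defined) and that any partial assignment respecting the partition with all loads at most $\assigndeg$ extends to such a full assignment. This follows from a standard bipartite-degree-constrained-subgraph / Hall-type argument: the bipartite ``allowed'' graph between $\reviewers_1$-agents and $\reviewers_2$-submissions (and vice versa) is a complete bipartite graph minus a perfect matching's worth of forbidden self-pairs across the sides, which is regular enough to admit a $\assigndeg$-regular spanning subgraph for any $\assigndeg \le \numrev/2$, and adding edges to an existing degree-feasible partial assignment preserves extendability. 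I would state this briefly and invoke it rather than belaboring it, since the substance of the proposition is entirely in the $\frac12$ probability bound.
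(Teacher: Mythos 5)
Your proof is correct and follows essentially the same route as the paper's: the identical computation that each pair of $\optassign_\simwhole$ is split with probability $\frac{\numrev/2}{\numrev-1}\geq\frac12$, linearity of expectation, and the observation that all surviving pairs can be feasibly assigned. The only difference is that you spell out the extendability of the partial assignment to a full load-$\assigndeg$ assignment, which the paper compresses into the single clause ``it is feasible to assign all pairs in $\optassign_\simwhole$ that respect the partition''; your extra care there is fine (and note the cross-partition allowed graph is in fact a full $K_{\numrev/2,\numrev/2}$, since the forbidden self-pairs lie within a side rather than across).
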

\newcommand{\proofproprand}{Since it is feasible to assign all pairs in $\optassign_\simwhole$ that respect the partition, Algorithm~\ref{algo:rand} achieves expected similarity \ifisarxiv
\begin{align*}
\mathbb{E}_{\assignment}\left[\sum_{(\adrev_i, \adpap_j) \in \assignment} \similarity_{i,j} \right] &\geq
\sum_{(\adrev_i, \adpap_j) \in \optassign_\simwhole} \similarity_{i,j} (\mathbb{P}[\adrev_i \in \reviewers_1, \adrev_j \in \reviewers_2]  + \mathbb{P}[\adrev_j \in \reviewers_1, \adrev_i\in \reviewers_2] ) \\
&=\sum_{(\adrev_i, \adpap_j) \in \optassign_\simwhole} \similarity_{i,j} \left( \frac{\numrev}{\numrev-1} \right) \frac{1}{2}  \\
&\geq \frac{1}{2} \opt_\simwhole.
\end{align*} \else \begin{align*}
&\mathbb{E}_{\assignment}\left[\sum_{(\adrev_i, \adpap_j) \in \assignment} \similarity_{i,j} \right] \\
& \qquad \geq
\sum_{(\adrev_i, \adpap_j) \in \optassign_\simwhole} \similarity_{i,j} (\mathbb{P}[\adrev_i \in \reviewers_1, \adrev_j \in \reviewers_2] \\ 
&\qquad \qquad \qquad  + \mathbb{P}[\adrev_j \in \reviewers_1, \adrev_i\in \reviewers_2] ) \\
&\qquad =\sum_{(\adrev_i, \adpap_j) \in \optassign_\simwhole} \similarity_{i,j} \left( \frac{\numrev}{\numrev-1} \right) \frac{1}{2}  \\
&\qquad \geq \frac{1}{2} \opt_\simwhole.
\end{align*}\fi}
\fullver{\begin{proof} \proofproprand \end{proof}}
\confver{The proof is provided in Appendix~\ref{apdx:rand}. }
Note that this bound on the expected performance of random partitioning is tight in the limit as $\numrev$ grows: in the worst-case over similarities, Algorithm~\ref{algo:rand} achieves exactly $\left(\frac{\numrev}{\numrev-1} \right) \frac{1}{2}  \opt$ similarity. This occurs when all agent-submission pairs assigned by $\optassign$ have similarity $1$, and all other pairs have similarity $0$.

\subsection{Worst-Case Upper Bound} 
Since $\frac{1}{2} \opt$ is easily attainable, the next natural question is: how much better is achievable? We establish an upper bound of $\frac{\assigndeg+1}{2\assigndeg + 1}\opt$ on the worst-case performance of any \strategyproof{} assignment algorithm.

\begin{theorem} \label{thm:ubound} 
For any $\assigndeg$ and any $\numrev$, there exist similarities $\simwhole$ for $\numrev$ agents such that no \strategyproof{} assignment has similarity greater than  $\frac{\assigndeg+1}{2\assigndeg+1}\opt_\simwhole$.  
\end{theorem}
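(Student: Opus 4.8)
The plan is to exhibit an explicit hard instance built out of disjoint ``tournament blocks.'' For clarity assume $2\assigndeg+1$ divides $\numrev$ and set $m = \numrev/(2\assigndeg+1)$; partition the agents into $m$ groups of size $2\assigndeg+1$ each. On each group, fix a regular tournament on its $2\assigndeg+1$ vertices --- for instance the rotational tournament in which agent $i$ beats agents $i+1,\dots,i+\assigndeg \pmod{2\assigndeg+1}$ --- and define $\similarity_{i,j}=1$ exactly when $(\adrev_i,\adpap_j)$ is an arc of the tournament on $\adrev_i$'s group, and $\similarity_{i,j}=0$ for every other pair. In a regular tournament on $2\assigndeg+1$ vertices each vertex has out-degree and in-degree exactly $\assigndeg$, so the disjoint union of the $m$ tournaments is itself a valid assignment with loads $\assigndeg$ in which every assigned pair has similarity $1$; hence $\opt_\simwhole = \numrev\assigndeg$, which in particular is positive.

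To bound any \strategyproof{} assignment, fix one, say $\assignment$, respecting a partition $(\reviewers_1,\reviewers_2)$. Take any group $g$ and let $x_g$ be the number of its agents placed in $\reviewers_1$, so the remaining $2\assigndeg+1-x_g$ of them lie in $\reviewers_2$. A pair $(\adrev_i,\adpap_j)$ with both agents in $g$ contributes to the similarity of $\assignment$ only if it is a tournament arc, and --- because $\assignment$ respects the partition --- only if $\adrev_i$ and $\adrev_j$ lie on opposite sides of it. Since in a tournament every pair of vertices is joined by exactly one arc, the number of such pairs that $\assignment$ can possibly use inside $g$ equals the number of cross-partition pairs within $g$, namely $x_g(2\assigndeg+1-x_g)$; and because $2\assigndeg+1$ is odd, $x_g(2\assigndeg+1-x_g)\le\assigndeg(\assigndeg+1)$ for every integer $x_g$. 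Summing over the $m$ groups and recalling that only similarity-$1$ pairs contribute to the objective,
$$\sum_{(\adrev_i,\adpap_j)\in\assignment}\similarity_{i,j}\;\le\;\sum_{g}x_g(2\assigndeg+1-x_g)\;\le\;m\,\assigndeg(\assigndeg+1)\;=\;\frac{\assigndeg+1}{2\assigndeg+1}\,\opt_\simwhole,$$
which is the claimed bound.

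The step I expect to be the real obstacle --- and the one driving the exact value $\tfrac{\assigndeg+1}{2\assigndeg+1}$ --- is seeing why a per-group loss bound this strong holds no matter how the algorithm splits a group: a group of the odd size $2\assigndeg+1$ can only be split near-evenly as $\assigndeg:(\assigndeg+1)$, and each such split severs exactly $\assigndeg(\assigndeg+1)$ of the group's $\binom{2\assigndeg+1}{2}$ arcs, which is also the expected number severed by a uniformly random balanced split; the oddness of $2\assigndeg+1$, together with the fact that the arc set between two parts of a tournament has exactly one arc per straddling vertex pair, is what pins this down. Two routine matters I would also dispatch: the partitioning algorithm gains nothing by padding $\assignment$ with its remaining (similarity-$0$) slots, since those contribute $0$; and when $2\assigndeg+1\nmid\numrev$ the leftover agents must be absorbed into the construction without degrading the ratio, which I would handle with a separate construction for those agents.
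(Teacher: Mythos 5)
Your proposal is correct and matches the paper's proof essentially exactly: both build disjoint rotational tournaments on groups of $2\assigndeg+1$ agents with similarity $1$ on the arcs, observe that $\optassign$ assigns all $\assigndeg(2\assigndeg+1)$ arcs per group, and bound any partition's yield per group by $x_g(2\assigndeg+1-x_g)\le\assigndeg(\assigndeg+1)$ cross pairs. Your write-up is somewhat more explicit about why the per-group bound holds for every split (and about the leftover agents when $2\assigndeg+1\nmid\numrev$, which the paper handles by simply giving them all-zero similarities), but the construction and counting are the same.
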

\begin{proof}
Place the agents into groups of size $2\assigndeg + 1$, leaving any remaining agents out. Within each complete group, number the agents from $0$ to $2\assigndeg$. For all $i$ from $0$ to $2\assigndeg$, set the similarity of $\adrev_i$ and $\adpap_{i+1}, \dots, \adpap_{(i + 1 + \assigndeg) \mod 2 \assigndeg + 1}$ to $1$. Set all other similarities to $0$. 
On these similarities, $\optassign$ can assign every similarity-$1$ pair, for a total of $\assigndeg (2 \assigndeg + 1)$ per group. The optimal partition splits each group into subsets of size $\assigndeg$ and $\assigndeg + 1$, allowing at most $\assigndeg (\assigndeg + 1)$ similarity-$1$ pairs to be assigned in each group.
\end{proof}

\subsection{Cycle-Breaking Algorithm}  \label{sec:deg1}
In this section, we present a simple algorithm that meets the upper bound of Theorem~\ref{thm:ubound} when  $\assigndeg=1$.

Define a ``cycle" $\gamma$ of length $\ell$ in an assignment as an ordered list of indices $\gamma_1, \dots, \gamma_{\ell}$ such that agent $\adrev_{\gamma_i}$ is assigned to submission $\adpap_{\gamma_{i+1}}$ (defining $\gamma_{\ell + 1} = \gamma_1$). In any assignment with loads $\assigndeg=1$, the full set of indices $[\numrev]$ can be uniquely partitioned into such cycles, since each agent is assigned to one submission and each submission is assigned one agent.

\begin{algorithm}[t!]
\caption{Cycle-Breaking Algorithm}
\label{algo:deg1}
\begin{algorithmic}[1]
\Require Agents $\reviewers$, papers $\papers$, similarities $\simwhole$, load $\assigndeg$
\State $\widetilde{\assignment}^*_\simwhole \gets$ max-similarity assignment with loads $1$ 
\State $\reviewers_1 \gets \emptyset$; $\reviewers_2 \gets \emptyset$
\For {cycle $\gamma$ of length $\ell$ in $\widetilde{\assignment}^*_\simwhole$}
\State $y \gets \min_{i \in [\ell]} \similarity_{{\gamma_i},{\gamma_{i+1}}}$ 
\State $A \gets \emptyset$; $B \gets \emptyset$
\For {$i \in [\ell]$}
\State $j \gets y + i \mod \ell$
\If {$i$ odd}
\State $A \gets A \cup \{\adrev_{\gamma_{j}}\}$
\Else 
\State $B \gets B \cup \{\adrev_{\gamma_{j}}\}$
\EndIf
\EndFor
\If {$|\reviewers_1| \leq |\reviewers_2|$} \label{ln:add_to_set}
\State $\reviewers_1 \gets \reviewers_1 \cup A$; $\reviewers_2 \gets \reviewers_2 \cup B$
\Else
\State $\reviewers_1 \gets \reviewers_1 \cup B$; $\reviewers_2 \gets \reviewers_2 \cup A$
\EndIf \label{ln:end_add}
\EndFor
\State $\assignment \gets $ max-similarity assignment with loads $\assigndeg$  respecting $(\reviewers_1, \reviewers_2)$
\State \Return assignment $\assignment$ and partition $(\reviewers_1, \reviewers_2)$
\end{algorithmic}
\end{algorithm}

Algorithm~\ref{algo:deg1} works by splitting each cycle in the optimal $\assigndeg=1$ assignment across the partition in the way that maximizes similarity. 
The following theorem shows a lower bound on the similarity of the \strategyproof{} assignment produced by this algorithm when $\assigndeg=1$. 
\begin{theorem} \label{deg1thm}
When $\assigndeg=1$, for any $\simwhole$, Algorithm~\ref{algo:deg1} finds a \strategyproof{} assignment with similarity at least $\frac{2}{3}\opt_\simwhole$ in polynomial time.
\end{theorem}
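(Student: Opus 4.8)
The plan is to lower-bound the similarity of a \emph{single} feasible assignment respecting the partition $(\reviewers_1,\reviewers_2)$ output by Algorithm~\ref{algo:deg1}; since the algorithm's last step returns the maximum-similarity assignment with loads $\assigndeg=1$ respecting that partition, the bound transfers to the algorithm's output. The witness assignment will be the restriction of $\widetilde{\assignment}^*_\simwhole$ to the assigned pairs that respect the partition, completed to a full loads-$1$ assignment.

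The heart of the argument is a per-cycle bound. Fix a cycle $\gamma$ of length $\ell$ in $\widetilde{\assignment}^*_\simwhole$; since no agent reviews their own paper, $\ell\geq 2$. Algorithm~\ref{algo:deg1} $2$-colors the agents around $\gamma$ alternately, starting just after a minimum-similarity edge of $\gamma$. If $\ell$ is even this is a proper $2$-coloring, so all $\ell$ assigned pairs $(\adrev_{\gamma_i},\adpap_{\gamma_{i+1}})$ respect the partition; if $\ell$ is odd, exactly one assigned pair is monochromatic, and by the choice of starting point it is a pair of minimum similarity in the cycle. In the odd case $\ell\geq 3$, so the dropped pair carries at most a $\tfrac1\ell\leq\tfrac13$ fraction of the cycle's total assigned similarity, and the respected pairs in $\gamma$ thus retain at least a $\tfrac23$ fraction (and trivially so when $\ell$ is even). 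Summing over all cycles, the set $E'$ of retained, partition-respecting pairs satisfies $\sum_{(\adrev_i,\adpap_j)\in E'}\similarity_{i,j}\geq\tfrac23\sum_{(\adrev_i,\adpap_j)\in\widetilde{\assignment}^*_\simwhole}\similarity_{i,j}=\tfrac23\,\opt_\simwhole$, where the last equality uses that with $\assigndeg=1$ the loads-$1$ optimum $\widetilde{\assignment}^*_\simwhole$ coincides with $\optassign_\simwhole$.

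It remains to complete $E'$ to a full assignment respecting $(\reviewers_1,\reviewers_2)$ without losing similarity, which in particular forces the partition to be balanced. For balance, note that even-length cycles contribute equal numbers of agents to the two sides, while each odd-length cycle contributes sizes differing by exactly one; since $\numrev$ is even there are an even number of odd cycles, and the rule on lines~\ref{ln:add_to_set}--\ref{ln:end_add} keeps the running value $|\reviewers_1|-|\reviewers_2|$ inside $\{0,1\}$, returning it to $0$ after every second odd cycle, so $|\reviewers_1|=|\reviewers_2|=\numrev/2$ at termination. Given balance, the pairs respecting $(\reviewers_1,\reviewers_2)$ form the disjoint union of two complete bipartite graphs (agents of $\reviewers_1$ with papers owned by $\reviewers_2$, and agents of $\reviewers_2$ with papers owned by $\reviewers_1$), each with sides of size $\numrev/2$; the partial matching $E'$ lies inside these graphs, hence extends to a perfect matching $\assignment'$ respecting the partition. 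Nonnegativity of similarities gives $\sum_{(\adrev_i,\adpap_j)\in\assignment'}\similarity_{i,j}\geq\sum_{(\adrev_i,\adpap_j)\in E'}\similarity_{i,j}\geq\tfrac23\opt_\simwhole$, so the returned assignment is at least this good. Polynomial running time is immediate, since computing $\widetilde{\assignment}^*_\simwhole$, the cycle decomposition and colorings, and the final max-similarity assignment respecting a fixed partition are all doable in polynomial time via min-cost flow or linear programming.

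I expect the routine-but-necessary obstacle to be the last paragraph: verifying that the greedy balancing genuinely yields $|\reviewers_1|=|\reviewers_2|$ (so that a strategyproof assignment respecting the partition even exists) and that discarding the monochromatic pairs still leaves a partial matching extendable to a full loads-$1$ assignment. The conceptual step — that in each odd cycle one should sacrifice the cheapest edge — is the easy part.
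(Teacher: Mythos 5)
Your proof is correct and follows essentially the same route as the paper's: split each cycle of the optimal loads-$1$ assignment alternately starting after a minimum-similarity edge, observe that only odd cycles (of length at least $3$) lose one edge and that it is the cheapest one, and conclude that the respected pairs retain at least $\tfrac{2}{3}\opt_\simwhole$. You simply spell out the balance and matching-extendability details that the paper states more tersely.
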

\begin{proof}
$(\reviewers_1, \reviewers_2)$ is a partition of $\reviewers$ since each agent is included in exactly one cycle in $\widetilde{\assignment}^*_\simwhole$. Further, $|\reviewers_1| = |\reviewers_2|$ since agents are added to the partition to keep it as balanced as possible and we assume $\numrev$ is even. 

We bound the value of the returned assignment $\assignment$ when $\assigndeg=1$. 
By construction, at most one agent-submission pair in each cycle of $\widetilde{\assignment}^*_\simwhole$ does not respect the partition. Any cycle containing such a disallowed pair must be of length at least three, and the disallowed pair must have the minimum similarity among all assigned pairs in the cycle. Since it is feasible to assign all pairs in $\widetilde{\assignment}^*_\simwhole$ that respect the partition, the value of the \strategyproof{} assignment must be at least $\frac{2}{3} \opt_\simwhole$.

The partitioning step can be done in $O(\numrev)$ time, since each agent is considered once, and finding the two maximum-similarity matchings can be done with high probability in $\widetilde{O}(\numrev^3)$ time~\cite{brand2021minimum}.
\end{proof}
 
\subsection{Coloring Algorithm} \label{sec:color}
In this section, we present another algorithm for strategyproof peer assessment, which meets the upper bound of Theorem~\ref{thm:ubound} for any $\assigndeg$. 
The algorithm begins by constructing a directed graph $\dirsimilarity$ representing the optimal assignment $\optassign$. This graph contains one vertex $v_i$ for all $i \in [\numrev]$, and an edge $(v_i, v_j)$ if $(\adrev_i, \adpap_j) \in \optassign$. We then find an equitable coloring of this graph, which is defined as follows.
\begin{definition} \label{def:eq_color}
For any $\alpha \in \mathbb{Z}_+$, an \textbf{equitable $\alpha$-coloring} of a directed graph $G=(V, E)$ is a function $f : V \to [\alpha]$ such that $f(v_i) \neq f(v_j) \quad \forall (v_i, v_j) \in E$ and $|\{v : f(v) = x\}| - |\{v : f(v) = y\}| \leq 1 \quad \forall x, y \in [\alpha]$. 
\end{definition}

The following well-known result shows that an equitable coloring of limited size can be found in polynomial time. 
\begin{theorem}~\cite{hajnal1970proof,kierstead2010fast} \label{thm:hajnal}
A graph $G = (V, E)$ with maximum degree at most $\Delta$ has an equitable $\Delta + 1$-coloring that can be found in $O(\Delta |V|^2)$ time.
\end{theorem}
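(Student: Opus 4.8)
The plan is to prove the existence statement, which is the Hajnal and Szemer\'{e}di theorem~\cite{hajnal1970proof}, and then observe that the argument can be made constructive with the stated running time. First I would normalize: writing $r := \Delta+1$ and $n := |V|$, by adding at most $r-1 \le \Delta$ isolated vertices we may assume $r \mid n$, so that an equitable $r$-coloring is exactly a partition of $V$ into $r$ independent sets of common size $s := n/r$ (deleting the dummy vertices afterwards preserves equitability, and for the directed version of Definition~\ref{def:eq_color} one simply forgets orientations, since the underlying undirected graph has maximum degree at most $\Delta$). I would then prove existence by induction on $|E|$. The empty graph is trivial. For the inductive step, pick any edge $xy$, apply the hypothesis to $G - xy$ to get an equitable $r$-coloring $V_1,\dots,V_r$; if $x,y$ lie in different classes we are done, so suppose both lie in $V_1$. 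Since $V_1\setminus\{x\}$ is independent in $G$ and $\deg_G(x)\le \Delta = r-1$ with $y \in V_1\cap N(x)$, the vertex $x$ has at most $r-2$ neighbors among the $r-1$ classes $V_2,\dots,V_r$, so some class $V_i$ with $i\ge 2$ contains no neighbor of $x$; moving $x$ there yields a \emph{nearly equitable} coloring: one class of size $s-1$, one of size $s+1$, all others of size $s$, all independent.

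The heart of the matter is the transformation lemma: a graph with $\Delta \le r-1$ admitting a nearly equitable $r$-coloring also admits an equitable one. I would prove this by an augmenting-path argument on an auxiliary digraph $D$ on the $r$ color classes, with an arc $V_i\to V_j$ whenever some vertex of $V_i$ has no neighbor in $V_j$ (hence can be legally recolored from $V_i$ to $V_j$). Let $W^+$ be the surplus class (size $s+1$) and $W^-$ the deficient class (size $s-1$). If $D$ has a directed path $W^+ = U_0 \to U_1 \to \dots \to U_t = W^-$ (found, say, by BFS, so the $U_j$ are distinct), fix witnesses $u_j\in U_{j-1}$ and perform the recolorings $u_t\to U_t,\ u_{t-1}\to U_{t-1},\ \dots,\ u_1\to U_1$ in this order. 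Deleting a vertex from a class never creates a new neighbor, so every move remains legal when it is executed; afterwards $U_0$ and $U_t$ have size $s$, every intermediate $U_j$ both loses and gains one vertex and stays at size $s$, and all other classes are untouched, giving an equitable coloring.

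The main obstacle is the \emph{blocking case}, where $W^-$ is not reachable from $W^+$ in $D$. Then the classes split into $\mathcal{A}$ (reachable from $W^+$, and containing it) and $\mathcal{B}$ (containing $W^-$), with no arc from $\mathcal{A}$ to $\mathcal{B}$; equivalently, every vertex of every $\mathcal{A}$-class has a neighbor in every $\mathcal{B}$-class. A naive double count of the edges between $A := \bigcup\mathcal{A}$ and $B := \bigcup\mathcal{B}$ — lower bound $|\mathcal{B}|\cdot|A|$ from the forced neighbors, upper bound $(r-1)|B|$ from the degree cap — is not by itself contradictory once $|\mathcal{B}|\ge 2$, because of the slack it leaves. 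Closing this gap is the genuinely delicate part, due to Hajnal and Szemer\'{e}di and streamlined by Kierstead and Kostochka: one refines the move analysis by singling out, for each $\mathcal{A}$-class, the vertices of $\mathcal{B}$-classes that have exactly one (``solo'') neighbor in it, performs secondary swaps exchanging such solo neighbors, and runs a nested minimality argument. This either produces an augmenting chain after all or forces the number of $A$--$B$ edges to exceed what $\Delta\le r-1$ allows, contradicting the choice of minimal counterexample. I would reproduce this step following Kierstead and Kostochka rather than reinvent it, and I expect it to be by far the most technical portion of the write-up.

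Finally, for the algorithmic claim I would make the above constructive by inserting the vertices of $V$ one at a time, maintaining at all times an equitable $r$-coloring of the vertices seen so far together with, for every vertex, the vector of its neighbor counts into each of the $r$ classes (an $n\times r$ table updated in $O(\Delta)$ time per elementary recoloring). Inserting a vertex with an arbitrary tentative color creates at most one monochromatic edge and at most one surplus class — i.e.\ a nearly equitable coloring — which is then repaired by the transformation procedure: building $D$ from the count table costs $O(nr)$, the BFS for an augmenting path costs $O(r^2)$, and executing the $O(r)$ recolorings costs $O(\Delta r)$, while the blocking case is resolved by a bounded number of refinement steps each of which strictly decreases an explicit potential. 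Charging this carefully across the $n$ insertions yields the stated $O(\Delta|V|^2)$ bound; the one point that needs genuine care is the amortized accounting of the blocking-case recursion, for which I would again follow \cite{kierstead2010fast}.
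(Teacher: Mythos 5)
The paper does not prove this statement at all: Theorem~\ref{thm:hajnal} is imported verbatim from the literature, with the existence part due to Hajnal and Szemer\'{e}di~\cite{hajnal1970proof} and the $O(\Delta|V|^2)$ algorithm due to Kierstead et al.~\cite{kierstead2010fast}. So there is no in-paper argument to compare against; what you have written is a reconstruction of the cited proof, and as an outline it is faithful and accurate. The normalization to $r\mid n$, the edge-deletion induction producing a nearly equitable coloring, the auxiliary digraph with the backwards-executed augmenting path (your observation that deleting a vertex from a class never invalidates a later move is exactly the point that makes the order of recolorings work), and the correct diagnosis that the naive degree count only closes the blocking case when $|\mathcal{B}|=1$ --- all of this matches the standard Kierstead--Kostochka treatment. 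Two caveats. First, your proposal is explicitly not self-contained: the solo-neighbor/secondary-swap analysis of the blocking case, which you defer to \cite{kierstead2010fast}, is the entire mathematical content of the theorem, so as a ``proof'' this is an annotated citation rather than an argument --- which happens to be precisely the status the theorem has in the paper itself. Second, a small repair is needed in the normalization step: after deleting the $d\le r-1$ dummy vertices, the classes are equitable only if no class contained two dummies; since dummies are isolated they can be freely shuffled so that each class holds at most one before deletion, but this should be said. Also note that for the paper's purposes the directed graph $\dirsimilarity$ has maximum total degree $2\assigndeg$, and the theorem is applied to its underlying undirected graph exactly as you indicate when you ``forget orientations'' in Definition~\ref{def:eq_color}.
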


\begin{algorithm}[t!]
\caption{Coloring Algorithm}
\label{algo:conj}
\begin{algorithmic}[1]
\Require Agents $\reviewers$, papers $\papers$, similarities $\simwhole$, load $\assigndeg$
\State $\optassign_\simwhole \gets$ max-similarity assignment with loads $\assigndeg$ 
\State $\dirsimilarity \gets$ directed graph representing $\optassign_\simwhole$
\State $f \gets$ equitable $(2\assigndeg + 2)$-coloring of $\dirsimilarity$ \label{ln:color}
\For{$T \in \{T : T \subseteq [2\assigndeg+2], |T| = \assigndeg+1\}$} \label{ln:color_part}
\State $\reviewers_{T} \gets \{ \adrev_i : v_i \in V, f(v_i) \in T\}$
\State $\reviewers_{T}' \gets \{ \adrev_i : v_i \in V, f(v_i) \not\in T\}$
\State $x_T \gets \sum_{\adrev_i \in \reviewers_T, \adrev_j \in \reviewers_{T}'}  \similarity_{i,j} \mathbb{I}[(\adrev_i, \adpap_j) \in \optassign_\simwhole] $ \label{ln:val}
\EndFor
\State $T^* = \argmax_T x_T$
\State $\reviewers_1 \gets \reviewers_{T^*}$; $\reviewers_2 \gets \reviewers_{T^*}'$
\State $\assignment \gets $ max-similarity assignment with loads $\assigndeg$ respecting $(\reviewers_1, \reviewers_2)$ 
\State \Return assignment $\assignment$ and partition $(\reviewers_1, \reviewers_2)$
\end{algorithmic}
\end{algorithm}

Algorithm~\ref{algo:conj} uses this result as a subroutine to find an equitable $(2\assigndeg +2)$-coloring of $\dirsimilarity$. It then partitions the colors in the way that maximizes the total similarity of pairs in $\optassign$ split by the partition. 
The following result proves that this algorithm is worst-case optimal.
\begin{theorem} \label{thm:coloralg} 
For any $\assigndeg$ and any $\simwhole$, if $\numrev$ is divisible by $2\assigndeg+2$, Algorithm~\ref{algo:conj} finds a \strategyproof{} assignment with similarity at least $\frac{\assigndeg+1}{2\assigndeg+1} \opt_\simwhole$ in polynomial time.
\end{theorem}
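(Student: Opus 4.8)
The plan is to reduce the statement to an equitable graph-cut argument. First I would note that in $\dirsimilarity$ every vertex has out-degree exactly $\assigndeg$ (its agent reviews $\assigndeg$ papers) and in-degree exactly $\assigndeg$ (its paper gets $\assigndeg$ reviewers), so the underlying undirected simple graph has maximum degree at most $2\assigndeg$. Applying Theorem~\ref{thm:hajnal} with $\Delta = 2\assigndeg+1$ then produces an equitable $(2\assigndeg+2)$-coloring $f$ in $O(\assigndeg\numrev^2)$ time, as in line~\ref{ln:color}. The role of the hypothesis $2\assigndeg+2 \mid \numrev$ is that an equitable coloring into $2\assigndeg+2$ classes must then have all classes of exactly the same size $\tfrac{\numrev}{2\assigndeg+2}$; hence for every $(\assigndeg+1)$-subset $T$ of the colors we get $|\reviewers_T| = |\reviewers_T'| = \tfrac{\numrev}{2}$, so $(\reviewers_T, \reviewers_T')$ is a balanced partition and the load constraints are satisfiable. (Without divisibility the classes could be unequal and $(\reviewers_T,\reviewers_T')$ need not be balanced, which is exactly why the theorem assumes it.)

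Next I would lower-bound the similarity of the assignment Algorithm~\ref{algo:conj} returns by the total similarity of the pairs of $\optassign_\simwhole$ that respect the selected partition. Those pairs form a feasible partial assignment respecting $(\reviewers_1,\reviewers_2)$: every agent reviews at most $\assigndeg$ of them, every paper receives at most $\assigndeg$ of them, no agent is matched to its own paper, and every matched pair crosses the partition. Since $\assigndeg \le \tfrac{\numrev}{2} = |\reviewers_1| = |\reviewers_2|$ and the bipartite graph of allowed pairs between $\reviewers_1$ and the papers of $\reviewers_2$ (and symmetrically) is complete, a standard bipartite-completion / Hall argument extends this to a full assignment with loads $\assigndeg$ respecting the partition without decreasing similarity. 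So it suffices to show that the partition the algorithm picks in lines~\ref{ln:color_part}--\ref{ln:val} --- the $(\assigndeg+1)$-subset $T$ maximizing the cut value $x_T$, i.e.\ the $\optassign_\simwhole$-similarity crossing $(\reviewers_T,\reviewers_T')$ --- cuts at least a $\tfrac{\assigndeg+1}{2\assigndeg+1}$ fraction of $\opt_\simwhole$.

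The final ingredient is an averaging argument over $T$. Let $T$ be uniform among the $\binom{2\assigndeg+2}{\assigndeg+1}$ subsets of size $\assigndeg+1$. For any $(\adrev_i,\adpap_j)\in\optassign_\simwhole$ we have $f(v_i)\neq f(v_j)$ since $f$ properly colors $\dirsimilarity$, and the probability that a uniformly random $(\assigndeg+1)$-subset of a $(2\assigndeg+2)$-element set contains exactly one of two fixed distinct elements is $\tfrac{2\binom{2\assigndeg}{\assigndeg}}{\binom{2\assigndeg+2}{\assigndeg+1}} = \tfrac{\assigndeg+1}{2\assigndeg+1}$. Hence the expected similarity of $\optassign_\simwhole$-pairs cut by $(\reviewers_T,\reviewers_T')$ is exactly $\tfrac{\assigndeg+1}{2\assigndeg+1}\opt_\simwhole$, so some subset attains at least this value and the algorithm's maximizing choice does too; together with the previous paragraph this proves the bound (which matches Theorem~\ref{thm:ubound}, so the algorithm is worst-case optimal). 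Polynomial running time follows since $\optassign_\simwhole$ and the final assignment are computable by min-cost flow / linear programming, the coloring by Theorem~\ref{thm:hajnal}, and there are only $\binom{2\assigndeg+2}{\assigndeg+1} = O(1)$ subsets to score (recall $\assigndeg$ is a constant). I expect the main obstacle to be the bookkeeping of the second paragraph --- certifying that the cut pairs of $\optassign_\simwhole$ genuinely extend to a feasible \strategyproof{} assignment --- together with being careful that the divisibility hypothesis is precisely what forces every candidate partition to be balanced; the binomial probability is a one-line computation once set up.
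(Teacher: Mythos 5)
Your proposal is correct and follows essentially the same route as the paper's proof: bound the degree of $\dirsimilarity$ by $2\assigndeg$ to invoke Theorem~\ref{thm:hajnal}, use divisibility by $2\assigndeg+2$ to get equal color classes and hence a balanced partition, average over the $\binom{2\assigndeg+2}{\assigndeg+1}$ color subsets to show the best cut captures a $\frac{\assigndeg+1}{2\assigndeg+1}$ fraction of $\opt_\simwhole$, and extend the cut pairs to a full feasible assignment. The only difference is that you spell out the Hall-type completion step and compute the cut probability via binomial coefficients, both of which the paper asserts more tersely with the same value $\frac{\assigndeg+1}{2(\assigndeg+1)-1}$.
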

\begin{proof}
\confver{The assumption that $\numrev$ is divisible by $2\assigndeg+2$ is needed to guarantee that $|\reviewers_1| = |\reviewers_2|$. }
Each vertex in $\dirsimilarity$ has in-degree and out-degree $\assigndeg$, so the maximum (total) degree is at most $2\assigndeg$. 
Therefore, Line~\ref{ln:color} can be implemented using Theorem~\ref{thm:hajnal} as a subroutine. 
\fullver{Further, since $\numrev$ is divisible by $2\assigndeg + 2$, all colors have exactly $\frac{\numrev}{2\assigndeg+2}$ vertices and so $|\reviewers_1| = |\reviewers_2|$. 

}Next, we bound the value of the returned assignment $\assignment$. 
Suppose we modify Line~\ref{ln:color_part} to choose $T$ uniformly at random from the set. Then, the expectation of $x_T$ in Line~\ref{ln:val} is 
    $\mathbb{E} \left[ x_T \right] 
    =  \sum_{(\adrev_i, \adpap_j) \in \optassign_\simwhole} \similarity_{i,j} \left(\frac{\assigndeg+1}{2(\assigndeg+1)-1}\right)
    = \frac{\assigndeg+1}{2\assigndeg+1} \opt_\simwhole.$
Therefore, $x_{T^*} \geq \frac{\assigndeg+1}{2\assigndeg+1} \opt_\simwhole$. 
Since it is feasible to assign all pairs whose similarity is counted in $x_{T^*}$,  the assignment $\assignment$ has similarity at least  $x_{T^*}$. 

Assuming $\assigndeg$ is constant, the time complexity of the partitioning step is dominated by the $O(\numrev^2)$ time taken to find the equitable coloring. Finding the two maximum-similarity matchings can be done with high probability in $\widetilde{O}(\numrev^3)$ time~\cite{brand2021minimum}. 
\end{proof}
\fullver{The assumption that $\numrev$ is divisible by $2\assigndeg+2$ is needed to guarantee that the partition is balanced. However, for arbitrary $\numrev$, the subsets of the partition differ in size by only $\assigndeg+1$ agents at most. If there are a small number of ``reserve'' agents who did not submit any work and are not used in $\optassign$, these agents can provide any evaluations needed for a feasible assignment. 
Since $\assigndeg$ is a small constant (often $\leq 3$), having access to enough reserve agents is likely not an issue in practice. For example, in a scientific peer review setting, many extra non-author reviewers are available; in a classroom setting, an instructor could grade the extra submissions.}

\subsection{Hardness} 
Although our algorithms are optimal on the worst-case input, one might hope for algorithms that can guarantee optimal performance on all inputs. However, the following result shows that when $\assigndeg \geq 2$, this is NP-hard.

\begin{theorem} \label{thm:nph}
For any $\assigndeg \geq 2$, it is NP-hard to find the optimal \strategyproof{} assignment, even when similarities are binary (that is,  when $\simwhole \in \{0, 1\}^{\numrev \times \numrev}$).
\end{theorem}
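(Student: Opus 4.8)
The plan is a polynomial-time reduction from \textsc{Max-Cut} on cubic graphs (which is NP-hard) to the decision version of our problem (``is there a \strategyproof{} assignment of value at least $B$?''). The guiding observation is that, once the partition $(\reviewers_1,\reviewers_2)$ is fixed, the best assignment respecting it is a degree-constrained bipartite subgraph problem and hence polynomial-time solvable; so all the difficulty must lie in choosing the partition, and I will engineer an instance in which the best partition is exactly a maximum cut.

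Given a cubic graph $H=(V_H,E_H)$, I would first fix an orientation of $E_H$ in which every vertex has in-degree and out-degree both in $\{1,2\}$, hence both at most $\assigndeg$ since $\assigndeg\ge 2$; such an orientation exists and is efficiently computable, by a routine degree-constrained orientation argument (every $X\subseteq V_H$ induces at most $\tfrac{3}{2}|X|\le 2|X|$ edges). Then I create one agent $\adrev_v$ (authoring paper $\adpap_v$) for each $v\in V_H$, set $\similarity_{u,v}=1$ exactly when the orientation contains the arc $u\to v$, and set every other similarity to $0$. Finally I adjoin $|V_H|$ ``dummy'' agents together with their papers, giving similarity $0$ to every pair that touches a dummy (as reviewer or as author); if needed, I first add isolated vertices to $H$ so that $\numrev$ is even and $\numrev/2$ comfortably exceeds $\assigndeg$.

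For correctness, fix any balanced partition. A \strategyproof{} assignment respecting it can collect only similarity-$1$ pairs whose endpoints lie on opposite sides -- i.e., arcs of $H$ that cross the partition -- each at most once, so its value is at most the number of edges of $H$ cut by the partition. Conversely, because the orientation has in- and out-degree at most $\assigndeg$, all crossing arcs can be assigned simultaneously within the load bounds, and the remaining load can then be filled with similarity-$0$ pairs: a partial assignment of maximum load $\le\assigndeg$ on sides of size $\ge\assigndeg$ extends to one of exact load $\assigndeg$, and a cross-partition assignment automatically avoids self-review. Hence the optimal \strategyproof{} value for a given partition equals the size of the cut it induces on $H$. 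The dummies never contribute to the objective, but since $|V_H|$ of them are available they can be distributed so as to turn any partition of the real vertices into a balanced partition of all $\numrev$ agents; therefore the maximum \strategyproof{} value over all partitions equals $\mathrm{MaxCut}(H)$, and the instance admits a \strategyproof{} assignment of value at least $B$ iff $H$ has a cut of size at least $B$. Since the construction is polynomial and the decision problem is clearly in NP, this proves NP-hardness for every fixed $\assigndeg\ge 2$.

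I expect the genuinely delicate part to be the feasibility bookkeeping rather than the high-level idea: one must (i) produce the low in/out-degree orientation so that every crossing arc is simultaneously realizable as an assignment, and (ii) verify that those crossing arcs, together with the dummies, always complete to an exact loads-$\assigndeg$ assignment respecting the partition. An alternative that avoids the dummy vertices entirely is to reduce instead from \textsc{Max-Bisection} on bounded-degree graphs (also NP-hard), trading the dummy argument for a slightly less standard source problem.
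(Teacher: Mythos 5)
Your reduction is correct and is essentially the paper's own proof: both reduce from Max-Cut on cubic graphs, orient the edges so that in- and out-degrees are at most $2 \le \assigndeg$, encode arcs as similarity-$1$ agent--paper pairs, and pad with $|V_H|$ zero-similarity dummy agents so that any cut of $H$ extends to a balanced partition whose best respecting assignment has value exactly the cut size. The only difference is cosmetic: the paper establishes the low in/out-degree orientation via an explicit path-reversal lemma (for max-degree-$4$ graphs) rather than citing a degree-constrained orientation criterion, and it glosses over the same feasibility bookkeeping (completing the crossing arcs to an exact loads-$\assigndeg$ assignment) that you flag as the delicate step.
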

\begin{proof}[Proof Sketch]
The proof is by reduction from the ``Simple Max Cut on Cubic Graphs'' problem~\cite{yannakakis1978node}. We construct an instance of the \strategyproof{} assignment problem where each agent corresponds to a vertex. For some orientation of the input graph, we set $\similarity_{ij}=1$ for each directed edge $(v_i, v_j)$, and set similarities to zero elsewhere. These edges could all be assigned by $\optassign$ when $\assigndeg \geq 2$, but the optimal \strategyproof{} assignment is limited to the max-cut value in the original graph.
\end{proof}
The complete proof is provided in \arxivver{Appendix~\ref{apdx:nph}}\nonarxivver{Appendix~A}.

\begin{algorithm}[t!]
\caption{Multi-Partition Algorithm} 
\label{algo:multi}
\begin{algorithmic}[1]
\Require Agents $\reviewers$, papers $\papers$, similarities $\simwhole$, load $\assigndeg$
\State $\optassign_\simwhole \gets$ max-similarity assignment with loads $\assigndeg$ 
\State $\dirsimilarity \gets$ directed graph representing $\optassign_\simwhole$
\State $f \gets$ equitable $(2\assigndeg + 1)$-coloring of $\dirsimilarity$
\State \Return \fullver{assignment} $\optassign_\simwhole$ and \fullver{partition with $2\assigndeg+1$ subsets} $(\{\adrev_j : v_j \in V, f(v_j) = i\}_{i \in [2\assigndeg + 1]})$
\end{algorithmic}
\end{algorithm}

\subsection{Partitions With More Than Two Subsets} \label{sec:multipart}
We now relax the definition of ``\strategyproof{}'' given in Definition~\ref{def:sp}. Rather than requiring that agents be partitioned into two subsets, we allow them to be partitioned into any constant  (i.e., not depending on $\numrev$) number of subsets. This slight relaxation of our problem formulation allows us to obtain a \strategyproof{} assignment that achieves total similarity $\opt_\simwhole$ for any $\simwhole$.

\nonarxivver{
\begin{figure*}[t!] 
    \centering
    \begin{subfigure}{0.32\textwidth}\includegraphics[width=1\textwidth]{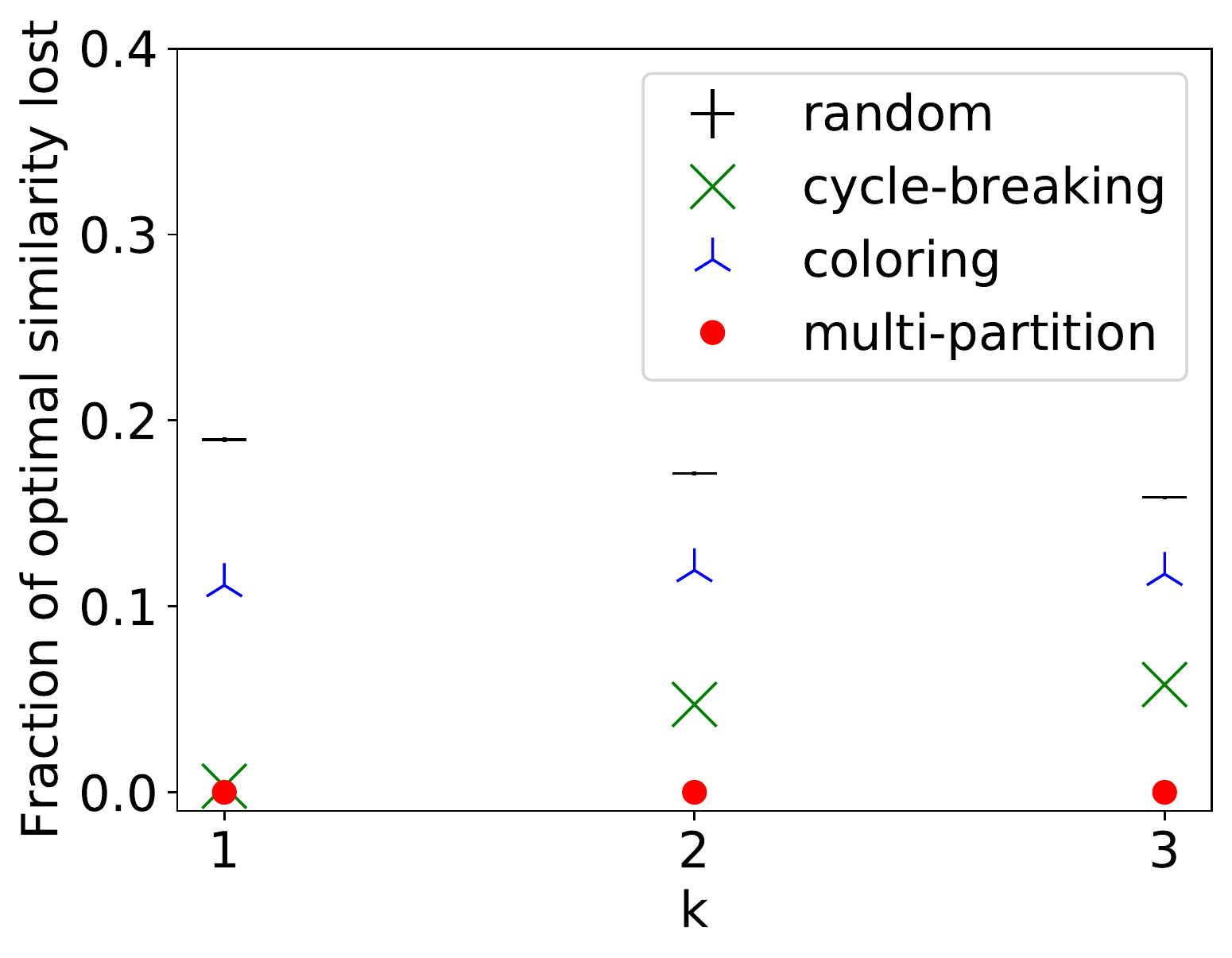}\caption{Assignment similarity lost}\label{fig:sim} \end{subfigure}
    \begin{subfigure}{0.32\textwidth}\includegraphics[width=1\textwidth]{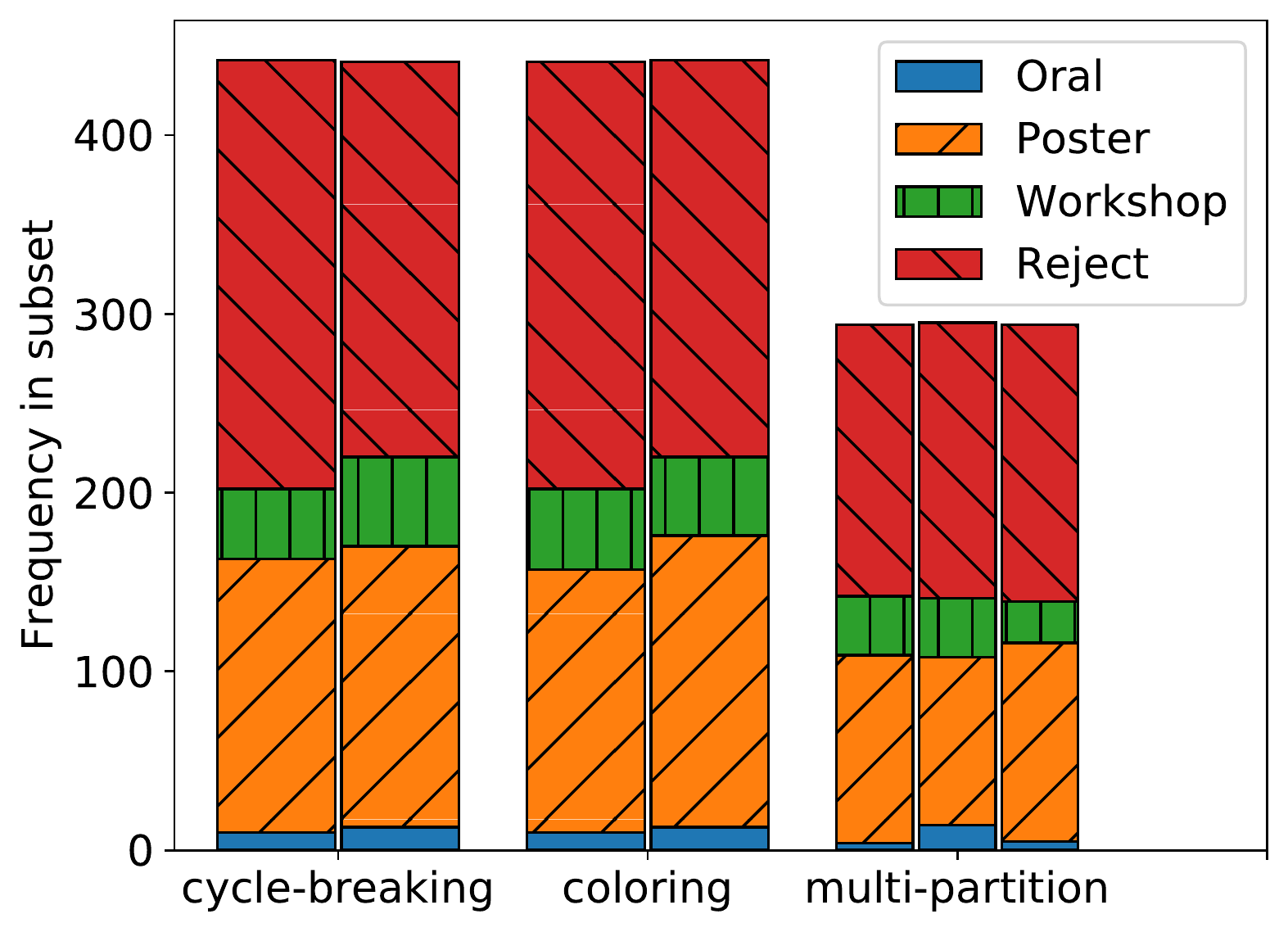}\caption{Partitioned paper decisions, \\ $\assigndeg=1$}\label{fig:outcome1} \end{subfigure} 
    \begin{subfigure}{0.32\textwidth}\includegraphics[width=1\textwidth]{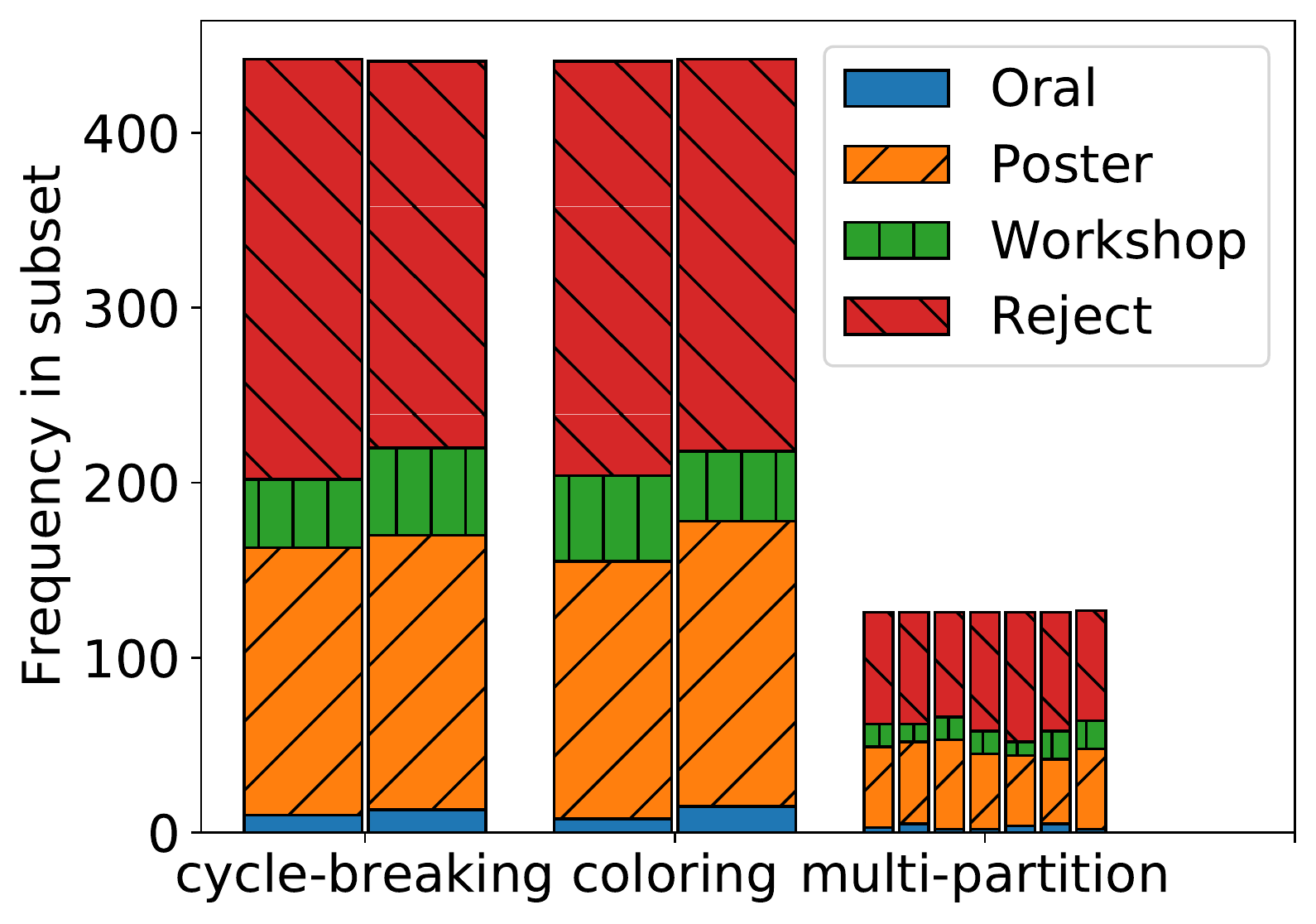}\caption{Partitioned paper decisions, \\ $\assigndeg=3$}\label{fig:outcome3} \end{subfigure} \\
    \begin{subfigure}{0.32\textwidth}\includegraphics[width=1\textwidth]{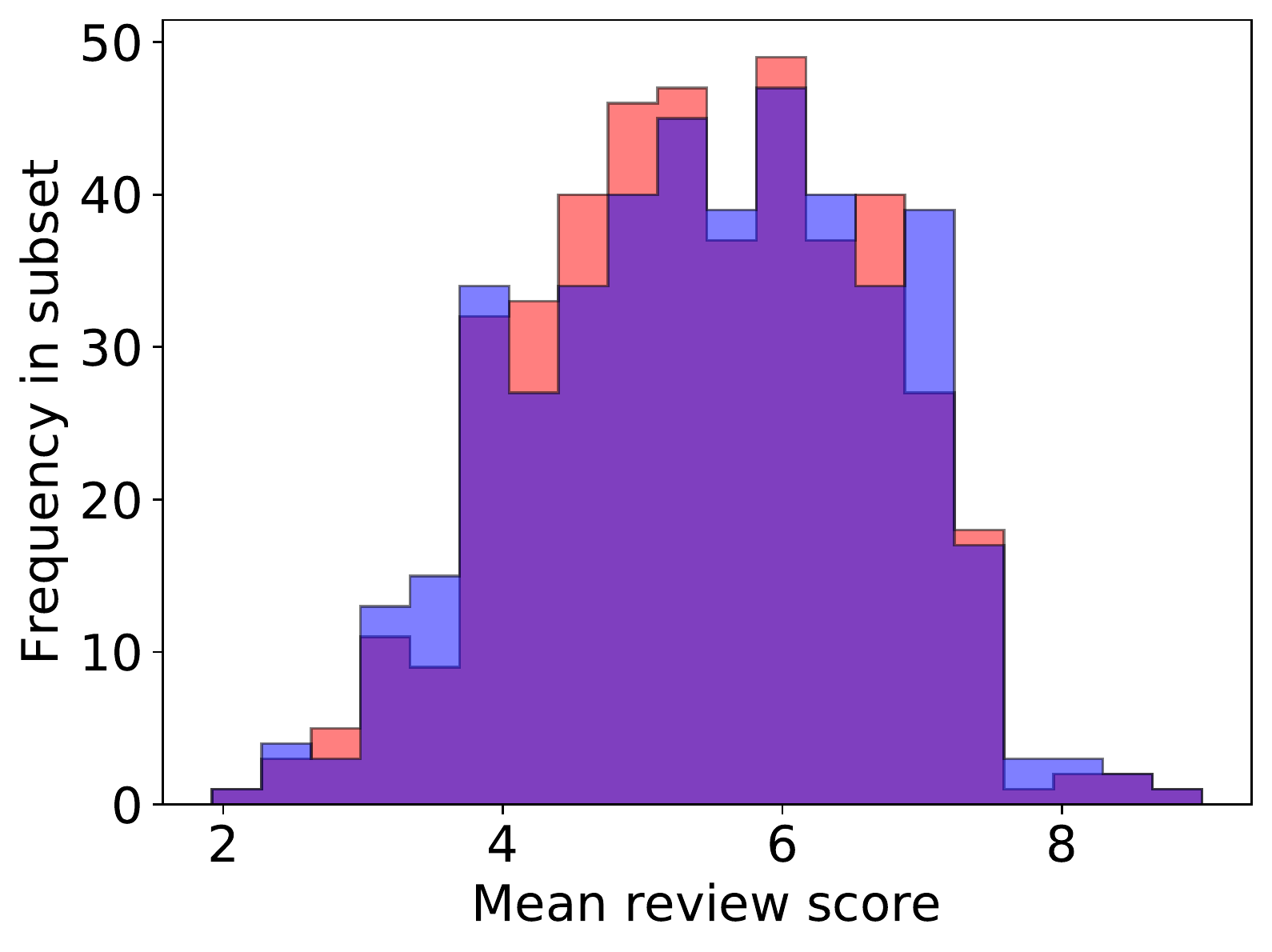}\caption{Partitioned paper scores for \\ the cycle-breaking algorithm, $\assigndeg=1$}\label{fig:score_cycle1} \end{subfigure}
    \begin{subfigure}{0.32\textwidth}\includegraphics[width=1\textwidth]{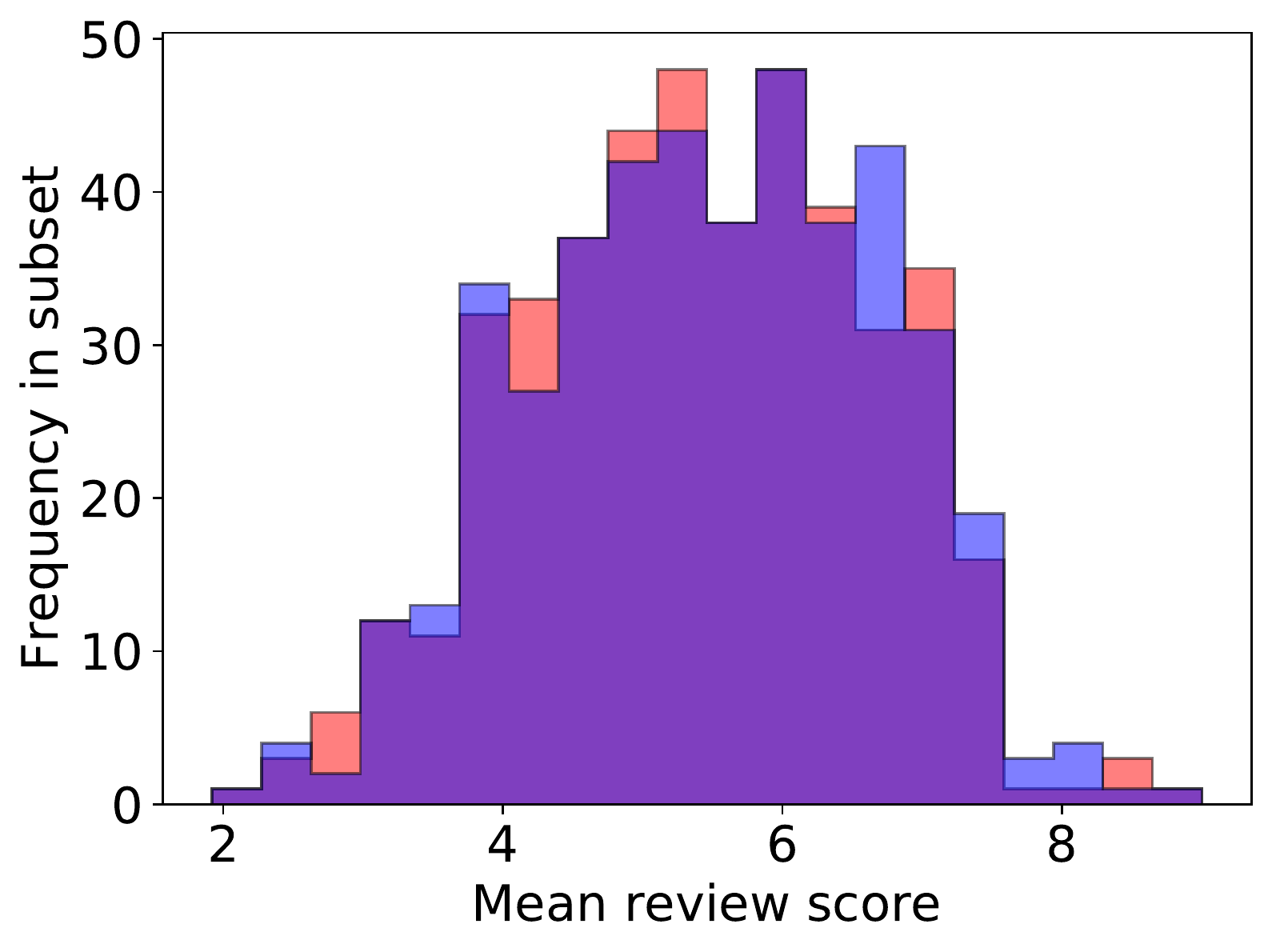}\caption{Partitioned paper scores for \\ the coloring algorithm, $\assigndeg=1$}\label{fig:score_color1} \end{subfigure} 
    \begin{subfigure}{0.32\textwidth}\includegraphics[width=1\textwidth]{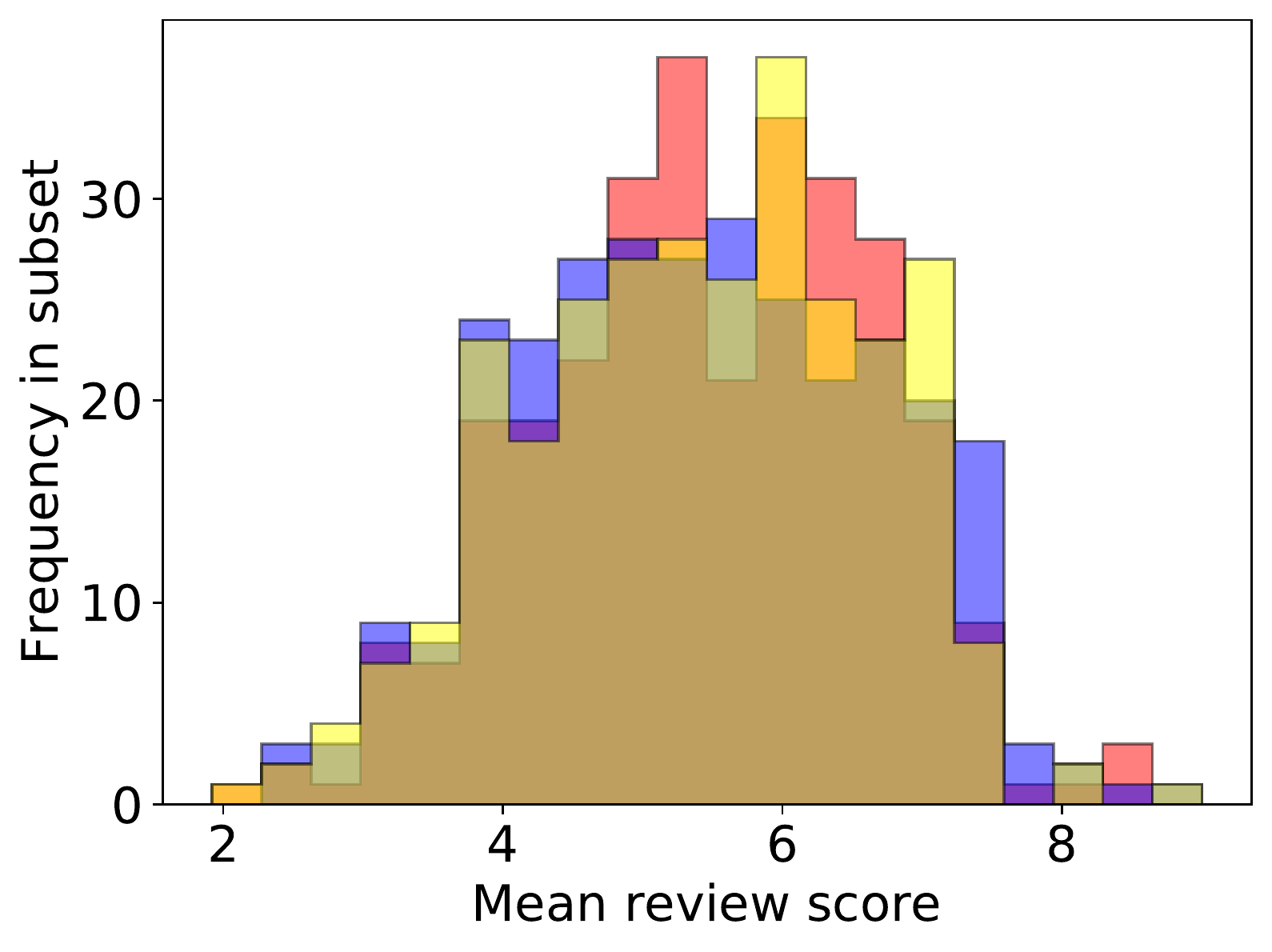}\caption{Partitioned paper scores for \\ the multi-partition algorithm, $\assigndeg=1$}\label{fig:score_multi1} \end{subfigure} 
    \caption{Experimental results on data from ICLR 2018.} \label{fig:results} 
\end{figure*}
}

\begin{theorem} \label{thm:kpart} 
For any $\assigndeg \geq 1$ and any $\simwhole$, Algorithm~\ref{algo:multi} finds a partition of agents into $2\assigndeg+1$ subsets, where each subset contains either $\lfloor \frac{\numrev}{2\assigndeg+1} \rfloor$ or $\lceil \frac{\numrev}{2\assigndeg+1} \rceil$ agents, and a \strategyproof{} assignment respecting this partition in polynomial time. This assignment has total similarity $\opt_\simwhole$.
\end{theorem}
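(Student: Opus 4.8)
The plan is to show that the equitable coloring computed in Algorithm~\ref{algo:multi} already induces the desired partition, and that $\optassign_\simwhole$ itself respects it, so that no similarity at all is sacrificed.

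First I would bound the maximum degree of $\dirsimilarity$. Because $\optassign_\simwhole$ is an assignment with loads $\assigndeg$, every agent is assigned exactly $\assigndeg$ submissions and every submission is assigned exactly $\assigndeg$ agents; hence in $\dirsimilarity$ every vertex has out-degree $\assigndeg$ and in-degree $\assigndeg$, so the underlying undirected graph has maximum degree at most $2\assigndeg$. Applying Theorem~\ref{thm:hajnal} with $\Delta = 2\assigndeg$ then produces an equitable $(2\assigndeg+1)$-coloring $f$ in $O(\assigndeg \numrev^2)$ time, which is exactly the coloring used in Algorithm~\ref{algo:multi}. By equitability the $2\assigndeg+1$ color classes have sizes that differ by at most one and sum to $\numrev$, so each class has size $\lfloor \numrev/(2\assigndeg+1)\rfloor$ or $\lceil \numrev/(2\assigndeg+1)\rceil$, giving the claimed balance of the partition.

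Next I would check that $\optassign_\simwhole$ respects the partition $(\{\adrev_j : f(v_j) = i\})_{i \in [2\assigndeg+1]}$, in the sense of the natural multi-subset generalization of constraint~\eqref{eq:sp_const}. For any assigned pair $(\adrev_i, \adpap_j) \in \optassign_\simwhole$ there is an edge $(v_i, v_j) \in \dirsimilarity$, and since $f$ is a proper coloring $f(v_i) \neq f(v_j)$, so $\adrev_i$ and $\adrev_j$ lie in different subsets — precisely what \strategyproof{} requires. Since $\optassign_\simwhole$ also satisfies the load constraints~\eqref{eq:lp_first}--\eqref{eq:lp_last} and never assigns an agent their own submission (by definition of $\optassign_\simwhole$), it is a feasible \strategyproof{} assignment for this partition, and its total similarity is $\opt_\simwhole$. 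Because evaluations are aggregated separately within each of the $2\assigndeg+1$ subsets, no agent can influence the outcome of their own submission, so the assignment is genuinely strategyproof.

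Finally I would account for running time: $\optassign_\simwhole$ is computable in polynomial time via min-cost flow or linear programming, building $\dirsimilarity$ takes $O(\assigndeg \numrev)$ time, and the equitable coloring takes $O(\assigndeg \numrev^2)$ time, so the whole procedure is polynomial (and $\widetilde{O}(\numrev^3)$ when $\assigndeg$ is a constant). There is no genuinely hard step; the one point to get right is the degree count, ensuring the bound is $2\assigndeg$ rather than anything larger so that $2\assigndeg+1$ colors suffice, together with the observation that ``no monochromatic edge in $\dirsimilarity$'' is exactly the strategyproofness-via-partitioning constraint — which is what lets the optimal assignment survive unchanged.
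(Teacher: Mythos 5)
Your proposal is correct and follows essentially the same route as the paper: bound the total degree of $\dirsimilarity$ by $2\assigndeg$, invoke Theorem~\ref{thm:hajnal} to obtain an equitable $(2\assigndeg+1)$-coloring, observe that a proper coloring means every assigned pair crosses the partition so $\optassign_\simwhole$ survives intact, and use equitability for the size bounds. Nothing is missing.
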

\newcommand{\proofthmkpart}{Each vertex in $\dirsimilarity$ has in-degree and out-degree $\assigndeg$, so the maximum (total) degree is at most $2\assigndeg$. Therefore, by Theorem~\ref{thm:hajnal} we can find an equitable $(2\assigndeg + 1)$-coloring of $\dirsimilarity$ in $O(\numrev^2)$ time. By Definition~\ref{def:eq_color}, the entirety of $\optassign_\simwhole$ respects the partition induced by the coloring and so is \strategyproof{} with respect to this partition. Also by Definition~\ref{def:eq_color}, all color classes differ in size by at most $1$.}
\fullver{\begin{proof} \proofthmkpart \end{proof}}
Algorithm~\ref{algo:multi} constructs a directed graph representing $\optassign$ as described in Section~\ref{sec:color}. It then finds an equitable $(2\assigndeg+1)$-coloring using Theorem~\ref{thm:hajnal} and uses this coloring as the partition. 
\confver{The complete proof is presented in Appendix~\ref{apdx:kpart}. }

Although we can recover the entire optimal similarity with this method, increasing the number of subsets comes at the cost of reliability in determining the post-evaluation outcomes, since all relative outcomes must be chosen independently in each subset. In Section~\ref{sec:exps}, we experimentally examine this cost.

\fullver{\subsection{Fairness Objective} \label{sec:minimax}}
\newcommand{\secfairness}{So far we have analyzed the objective of maximizing total similarity~\eqref{EqnSumSim} due to its widespread use. However, this objective has been found to result in imbalanced or unfair assignments~\cite{stelmakh2018assignment}. An alternative proposed in the literature is to optimize the max-min fairness objective, which maximizes the total similarity assigned to the submission with minimum assigned similarity~\cite{garg2010assigning,stelmakh2018assignment,kobren19localfairness}. 
Formally, the problem of finding the optimal \strategyproof{} assignment under this objective is:
\begin{align*}
    &\argmax_{\reviewers_1, \reviewers_2 \subseteq \reviewers; \assignment \subseteq \reviewers \times \papers} \min_{\adpap_j \in \papers} \sum_{\adrev_i \in \reviewers} \similarity_{i,j} \mathbb{I}[(\adrev_i, \adpap_j) \in \assignment] \\
    &\text{subject to } \eqref{eq:lp_first} - \eqref{eq:lp_last}, \eqref{eq:sp_const}, \eqref{eq:sp_last}.
\end{align*}
Assignment algorithms optimizing this objective have been used in venues such as ICML 2020 and implemented in conference management platforms such as OpenReview.net.

In this section, we analyze the price of strategyproofing under this max-min objective. The following result shows that unfortunately, we cannot hope to do well on this objective in the worst-case.
\begin{theorem} \label{thm:minimax}
For any $\assigndeg$ and any $\numrev \geq 6$, there exist similarities $\simwhole$ on $\numrev$ agents such that the optimal non-strategyproof{} assignment has max-min objective value strictly greater than $0$ while no \strategyproof{} assignment has a max-min objective value greater than $0$.  
\end{theorem}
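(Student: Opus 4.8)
The plan is to exhibit, for each $\assigndeg$ and each $\numrev \ge 6$, an explicit similarity matrix $\simwhole$ witnessing the claim. The guiding intuition is that a max-min objective strictly above $0$ requires \emph{every} submission to receive at least one positive-similarity assigned agent, so the positive-similarity bipartite graph must admit a $\assigndeg$-regular assignment avoiding self-pairs; but strategyproofness additionally forces the agents to be split into two halves with no within-half assigned pairs. So I want a ``positive support'' graph that is just barely connected enough to admit a full assignment, yet whose structure makes any balanced bipartition leave some submission with no admissible positive-similarity agent. The natural construction: designate one special agent $\adrev_1$ whose submission $\adpap_1$ has positive similarity \emph{only} with a small set $C$ of agents, and arrange the remaining similarities so that all agents in $C$ must lie on the same side of any partition that keeps everyone else's submission covered — thereby trapping $\adpap_1$ on the wrong side.

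Concretely, I would try the following. Take $\adpap_1$ to have similarity $1$ only with agents in some set $C$ of size exactly $\assigndeg$ (these are the only agents who can cover $\adpap_1$). For the agents in $C$, arrange their submissions so that the only agents with positive similarity to them are $\adrev_1$ together with a few others, forcing a rigid ``chain'' of co-location constraints: if $\adpap_1$ is covered, all of $C$ is on one side, hence (by the covering constraints on $C$'s submissions) $\adrev_1$ is forced to the same side as $C$, contradicting the requirement that $\adrev_1$ cover one of $C$'s submissions from the other side. The remaining $\numrev - \assigndeg - 1$ agents get a generic high-connectivity block of $1$'s among themselves (e.g.\ a circulant pattern as in the proof of Theorem~\ref{thm:ubound}) so that the non-strategyproof optimum easily achieves max-min value $\ge 1 > 0$ by assigning each submission $\assigndeg$ of its positive-similarity agents. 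The condition $\numrev \ge 6$ should be exactly what is needed to fit the special gadget ($\adrev_1$, the set $C$, and their ``partners'') alongside a valid filler block while keeping $\numrev$ even and $\assigndeg \le \numrev/2$; I would verify the small boundary cases ($\numrev = 6$, $\assigndeg \in \{1,2,3\}$) by hand.

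The two things to check carefully, in order: (1) \textbf{Feasibility of the non-strategyproof optimum with positive max-min value} — I must confirm the positive-similarity graph is $\assigndeg$-regular-assignable with no self-loops, i.e.\ every submission has $\ge \assigndeg$ positive-similarity non-author agents and a system-of-distinct-representatives / Hall-type argument (or an explicit assignment) goes through. (2) \textbf{Impossibility under any bipartition} — this is the heart of the argument and the main obstacle: I need to show that for \emph{every} partition $(\reviewers_1,\reviewers_2)$ into equal halves, some submission's entire set of admissible (cross-partition, non-author) positive-similarity agents is empty, so its assigned similarity is forced to $0$. The cleanest way is to make the co-location constraints genuinely forcing: show that covering every submission in the gadget forces a unique (up to swapping sides) assignment of gadget agents to sides, and that this assignment strands $\adpap_1$. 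I expect step (2) to require the most care, since I must rule out \emph{all} partitions, not just natural ones; designing the gadget so that the covering constraints chain together into a rigid parity/connectivity obstruction — rather than merely making one partition bad — is the crux. If a purely combinatorial gadget proves fiddly, an alternative is to argue via a counting/degree argument: show the positive-support graph has a vertex whose positive-similarity neighborhood is so small that it cannot simultaneously be covered and respect \emph{any} valid partition.
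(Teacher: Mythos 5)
Your high-level intuition is right --- you want a positive-similarity support graph that is coverable by some assignment but such that \emph{every} bipartition strands some submission --- but the proposal does not actually contain a proof: you explicitly defer step (2), which is the entire content of the theorem, and the forcing logic of your gadget does not hold together as sketched. If $\adpap_1$ has positive similarity only with a set $C$, then covering $\adpap_1$ across a partition requires merely that \emph{some} agent of $C$ lie on the opposite side from $\adrev_1$; it does not force ``all of $C$ onto one side,'' so the chain of co-location constraints you describe never gets started. To make the contradiction go through for \emph{all} partitions (not just natural ones), you need a structural obstruction to 2-coloring, and your gadget as described does not visibly have one. There is also a secondary worry with making $|C|=\assigndeg$: you are conflating ``covered by at least one positive-similarity agent'' (which is all the max-min objective needs to be positive) with needing all $\assigndeg$ slots filled by positive pairs.

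The clean realization of your own idea --- which is what the paper does --- is to make the positive-similarity support a union of two directed \emph{odd} cycles: split the $\numrev$ agents into two groups of odd size at least $3$ (possible since $\numrev\geq 6$ is even), and within each group set $\similarity_{\gamma_i,\gamma_{i+1}}=1$ around the cycle, all other similarities $0$. Then every submission has a \emph{unique} positive-similarity agent (its predecessor), so ``covering'' forces consecutive agents to opposite sides; since an odd cycle admits no proper 2-coloring, every bipartition whatsoever places some consecutive pair $\adrev_{\gamma_i},\adrev_{\gamma_{i+1}}$ in the same subset, and the submission $\adpap_{\gamma_{i+1}}$ then has assigned similarity $0$. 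Meanwhile the non-\strategyproof{} optimum assigns every similarity-$1$ pair (the cycle edges form a permutation, extendable to loads $\assigndeg$), giving max-min value $1>0$. The uniqueness of each submission's positive neighbor is what turns your vague ``rigid chain of constraints'' into a genuine parity obstruction, and the two odd cycles are exactly why $\numrev\geq 6$ is needed. I recommend you rebuild the construction around this non-bipartiteness argument rather than the special-agent gadget.
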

\begin{proof}
Split the agents into two groups such that both groups have an odd number of agents at least $3$; this is possible since we assume $\numrev$ is even. Within each group $\{\adrev_{\gamma_1}, \dots, \adrev_{\gamma_\ell}\}$ of size $\ell$, set similarities $\similarity_{\gamma_i, \gamma_{i+1}} = 1$ for all $i \in [\ell-1]$ and $\similarity_{\gamma_\ell, \gamma_1} = 1$. Set similarities to $0$ elsewhere. On these similarities, the optimal non-strategyproof{} assignment can assign every similarity-$1$ pair for a max-min fairness of $1$. However, since the number of reviewers in each group is odd, any partition of $\reviewers$ into two subsets must place two agents $\adrev_{\gamma_i}, \adrev_{\gamma_{i+1}}$ (or $\adrev_{\gamma_\ell}, \adrev_{\gamma_1}$) from each group in the same subset. Therefore, some submission $\adpap_{\gamma_{i+1}}$ from each group will have an assigned similarity of $0$.  
\end{proof}}
\fullver{\secfairness}

\arxivver{
\begin{figure*}[t!] 
    \centering
    \includegraphics[width=0.5\textwidth]{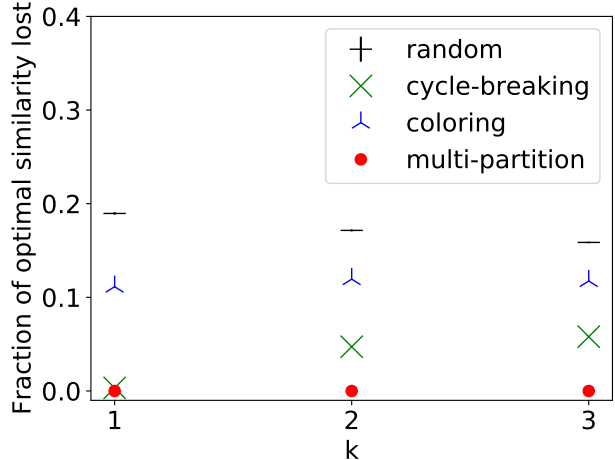}
    \caption{Assignment similarity lost on data from ICLR 2018.} \label{fig:sim} 
\end{figure*}

\begin{figure*}[t!]
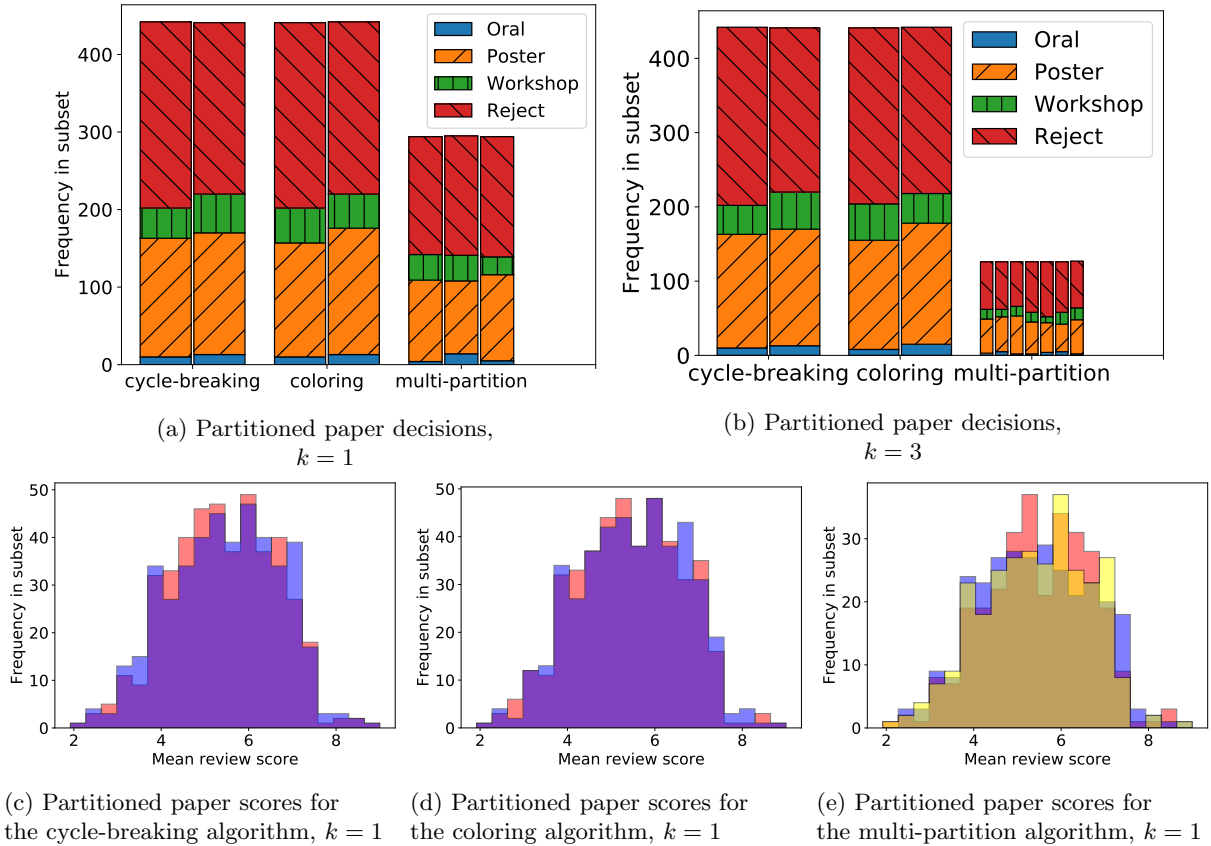
 
    \centering
    \begin{subfigure}{0.45\textwidth}\includegraphics[width=1\textwidth]{imgs/outcomes_k1.pdf}\caption{Partitioned paper decisions, \\ $\assigndeg=1$}\label{fig:outcome1} \end{subfigure} 
    \begin{subfigure}{0.45\textwidth}\includegraphics[width=1\textwidth]{imgs/outcomes_k3.pdf}\caption{Partitioned paper decisions, \\ $\assigndeg=3$}\label{fig:outcome3} \end{subfigure} \\
    \begin{subfigure}{0.32\textwidth}\includegraphics[width=1\textwidth]{imgs/scores_k1_k1.pdf}\caption{Partitioned paper scores for \\ the cycle-breaking algorithm, $\assigndeg=1$}\label{fig:score_cycle1} \end{subfigure}
    \begin{subfigure}{0.32\textwidth}\includegraphics[width=1\textwidth]{imgs/scores_color_k1.pdf}\caption{Partitioned paper scores for \\ the coloring algorithm, $\assigndeg=1$}\label{fig:score_color1} \end{subfigure} 
    \begin{subfigure}{0.32\textwidth}\includegraphics[width=1\textwidth]{imgs/scores_multi_k1.pdf}\caption{Partitioned paper scores for \\ the multi-partition algorithm, $\assigndeg=1$}\label{fig:score_multi1} \end{subfigure} 
    \caption{Partition quality on data from ICLR 2018.} \label{fig:results} 
\end{figure*}
}

\section{Experimental Results} \label{sec:exps}
In this subsection, we experimentally examine the performance of algorithms for \strategyproof{} assignment.

\subsection{Setup}
We evaluate our algorithms on data from the peer-review process at ICLR 2018. We use similarities recreated in~\cite{xu2018strategyproof}. 
To evaluate the partition quality, we also use the actual review scores and the accept/reject decisions at the ICLR 2018 conference~\cite{he2020openreview}.

Since our algorithms require that each agent authors exactly one submission, we find a maximum one-to-one matching on the real authorship graph and use this as the authorship for our experiments. This resulted in matching 883 out of the 911 papers. We then discarded any reviewers and papers not included in the authorship graph. 
Any additional reviewers required for feasibility (due to the divisibility of $\numrev$) have zero similarity with all papers. 

We evaluate four partitioning algorithms: random partitioning (Algorithm~\ref{algo:rand}), the cycle-breaking algorithm (Algorithm~\ref{algo:deg1}), the coloring algorithm (Algorithm~\ref{algo:conj}), and the multi-partition algorithm (Algorithm~\ref{algo:multi}). 
Since each paper received $3$ reviews at ICLR 2018, we test values of $\assigndeg \in \{1, 2, 3\}$.

\fullver{Additional experimental results are available in \arxivver{Appendix~\ref{apdx:exps}}\nonarxivver{Appendix~B}.}

\subsection{Assignment Similarity}
We first examine the similarity of the \strategyproof{} assignments produced by each algorithm. In Figure~\ref{fig:sim}, we report the price of strategyproofness: the difference in total similarity between the proposed algorithm's assignment and the optimal non-strategyproof assignment, as a fraction of the optimal assignment's  total similarity. Results for the random partitioning algorithm are averaged over $100$ trials; error bars representing standard error of the mean are too small to be visible. As expected from our theoretical results, the multi-partition algorithm achieves the full similarity of the optimal non-strategyproof assignment. On all values of $\assigndeg$, the cycle-breaking algorithm performs very well: it loses less than $1\%$ of the optimal similarity when $\assigndeg=1$, and furthermore, it outperforms the coloring algorithm even for higher values of $\assigndeg$ (where it does not have theoretical guarantees). The coloring algorithm loses around $12\%$ of the optimal similarity for all values of $\assigndeg$. The baseline of random partitioning still loses less than $20\%$ of the optimal similarity, but is outperformed by the other algorithms. Overall, on real data our algorithms  perform quite well in terms of the quality of the assignment as compared to the optimal non-strategyproof assignment.

\subsection{Partition Quality}
We next examine whether the partitions produced by these algorithms place similar-quality papers into each subset, since under the partition-based method, the final accept/reject decisions for papers are performed independently in each subset. 
In Figures~\ref{fig:outcome1} and \ref{fig:outcome3}, we display the number of papers receiving each decision (oral presentation, poster presentation, invitation to workshop track, or rejection) in each subset of the partitions. For each algorithm, each bar displays the decisions for the papers in one subset of the partition. Across all algorithms and values of $\assigndeg$, the partitions constructed have very similar numbers of papers receiving each decision in each subset. 
Since a very small number of papers (23 out of 883) are accepted for oral presentation overall, the relative difference in the number of oral papers between subsets is sometimes large; however, the absolute difference in the number of oral papers remains small. 

Further, in Figures~\ref{fig:score_cycle1}, \ref{fig:score_color1}, and \ref{fig:score_multi1}, we show the mean review scores given to each paper for the case of $\assigndeg=1$. In Figures~\ref{fig:score_cycle1} and \ref{fig:score_color1}, the red and blue histograms correspond to the scores given to the papers in the two subsets of the algorithm's partition, with the purple section indicating their overlap; in Figure~\ref{fig:score_multi1}, the third subset is additionally indicated in yellow. For all algorithms, the distributions of scores appear very similar across subsets of the partition. 
Formally, we test the difference between the score distributions of different subsets via the two-sample Kolmogorov-Smirnov test, a non-parametric test of the null hypothesis that the two samples came from the same distribution. Each sample is the set of scores given to the papers in one subset of the partition. 
We report the results of the test in Table~\ref{tab:ks}, which contains the $p$-values of the test along with the effect size $D$, defined as the maximum difference between the empirical cdfs of the two samples. For the multi-partition algorithm, we test each pair of subsets and we report results for the pair with highest $D$. In all cases, the $p$-values are high, meaning that the test cannot reject the hypothesis that the subsets were drawn from the same distribution. 

These experiments provide evidence that the partitions created by our algorithms do not have any substantial difference in the quality of papers in each subset. 

\begin{table}
\centering
\begin{tabular}{rlll} \toprule
Algorithm                                  & $\assigndeg$  & $p$      & $D$      \\ \midrule
Cycle-breaking                             & - & 0.9007 & 0.0373 \\
Coloring                                   & 1           & 0.8902 & 0.0379 \\
                                           & 2           & 0.6445 & 0.0487 \\
                                           & 3           & 0.5389 & 0.0530 \\
Multi-partition                            & 1           & 0.4282 & 0.0702 \\
                                           & 2           & 0.6805 & 0.0742 \\
                                           & 3           & 0.3457 & 0.1142\\
\bottomrule
\end{tabular} 
\caption{Results of the Kolmogorov-Smirnov test of whether the review scores in the two partitioned subsets are drawn from the same distribution.}\label{tab:ks} 
\end{table}

\begin{algorithm*}[t!]
\caption{Heuristic Algorithm for Arbitrary Authorship}
\label{algo:gen_auth}
\begin{algorithmic}[1]
\Require agents $\reviewers$, papers $\papers$, similarities $\simwhole$, authorship graph $\authorship$, paper load $\assigndeg_\adpap$, maximum agent load $\assigndeg_\adrev$
\State $\overline{\assignment}^* \gets$ max-similarity assignment with loads $(\assigndeg_\adrev, \assigndeg_\adpap)$ 
\State $\{V_1, \dots, V_N\} \gets$ vertices of the connected components of $\authorship$
\State $\reviewers' \gets \{\adrev_i' : i \in [N]\}$; $\papers' \gets \{\adpap_i' : i \in [N]\}$
\For {$i,j \in [N]$}
\State $\similarity'_{ij} \gets \sum_{\adrev_a \in V_i, \adpap_b \in V_j} \similarity_{ab} \mathbb{I}[(\adrev_a, \adpap_b) \in \overline{\assignment}^*] + \sum_{\adrev_a \in V_j, \adpap_b \in V_i} \similarity_{ab} \mathbb{I}[(\adrev_a, \adpap_b) \in \overline{\assignment}^*]$
\EndFor
\State $\assignment', (\reviewers_1', \reviewers_2') \gets$ output of Algorithm~\ref{algo:deg1} on input $(\reviewers', \papers', \simwhole', \assigndeg'=1)$ \label{ln:alg_k1_call}
\State $\mathcal{T}_1 \gets \bigcup_{i : \adrev_i' \in \reviewers_1'} V_i$;  $\mathcal{T}_2 \gets \bigcup_{i : \adrev_i' \in \reviewers_2'} V_i$
\State $\assignment \gets $ max-similarity assignment with loads $(\assigndeg_\adrev, \assigndeg_\adpap)$  respecting $(\mathcal{T}_1, \mathcal{T}_2)$
\State \Return assignment $\assignment$ and partition $(\mathcal{T}_1, \mathcal{T}_2)$
\end{algorithmic}
\end{algorithm*}

\section{Heuristic Algorithm for Arbitrary Authorship} \label{sec:gen_auth}
In this section, we propose an algorithm for \strategyproof{} assignment that can accommodate arbitrary authorship of submissions, as opposed to the one-to-one authorship that we assume in our problem formulation (Section~\ref{sec:prob}). This algorithm is closely based on the cycle-breaking algorithm (Algorithm~\ref{algo:deg1}) from Section~\ref{sec:deg1}. We do not have any theoretical guarantees for this algorithm, but we provide evaluations on the ICLR 2018 dataset introduced in Section~\ref{sec:exps}.

\subsection{Algorithm}
Arbitrary authorship can be represented as a graph $\authorship$ where each agent and each submission are represented as vertices, and an edge between an agent and submission indicates that the agent authored that submission. Since authorship is not one-to-one, the number of agents and submissions may differ and the agent and submission loads need not be the same. Define $\assigndeg_\adpap$ as the paper load and $\assigndeg_\adrev$ as the maximum agent load. A \strategyproof{} assignment algorithm in this setting will produce a partition of both agents and submissions, along with an assignment that respects this partition by assigning each submission only agents from the other subset.

\begin{figure*}[t!] 
    \centering
    \begin{subfigure}{0.24\textwidth}\includegraphics[width=1\textwidth]{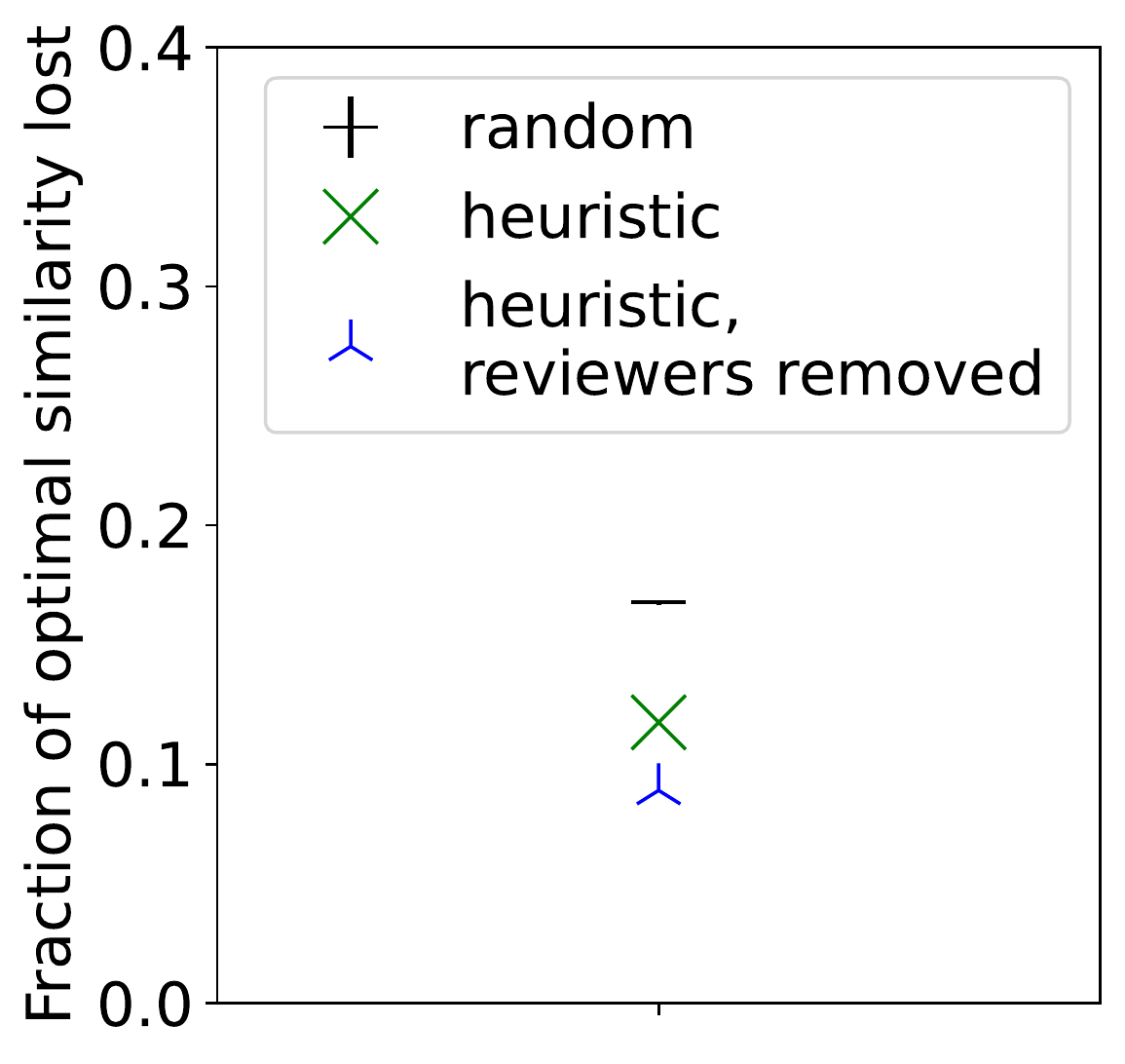}\caption{Assignment similarity lost}\label{fig:sim_gen}
    \end{subfigure}
    \begin{subfigure}{0.32\textwidth}\includegraphics[width=1\textwidth]{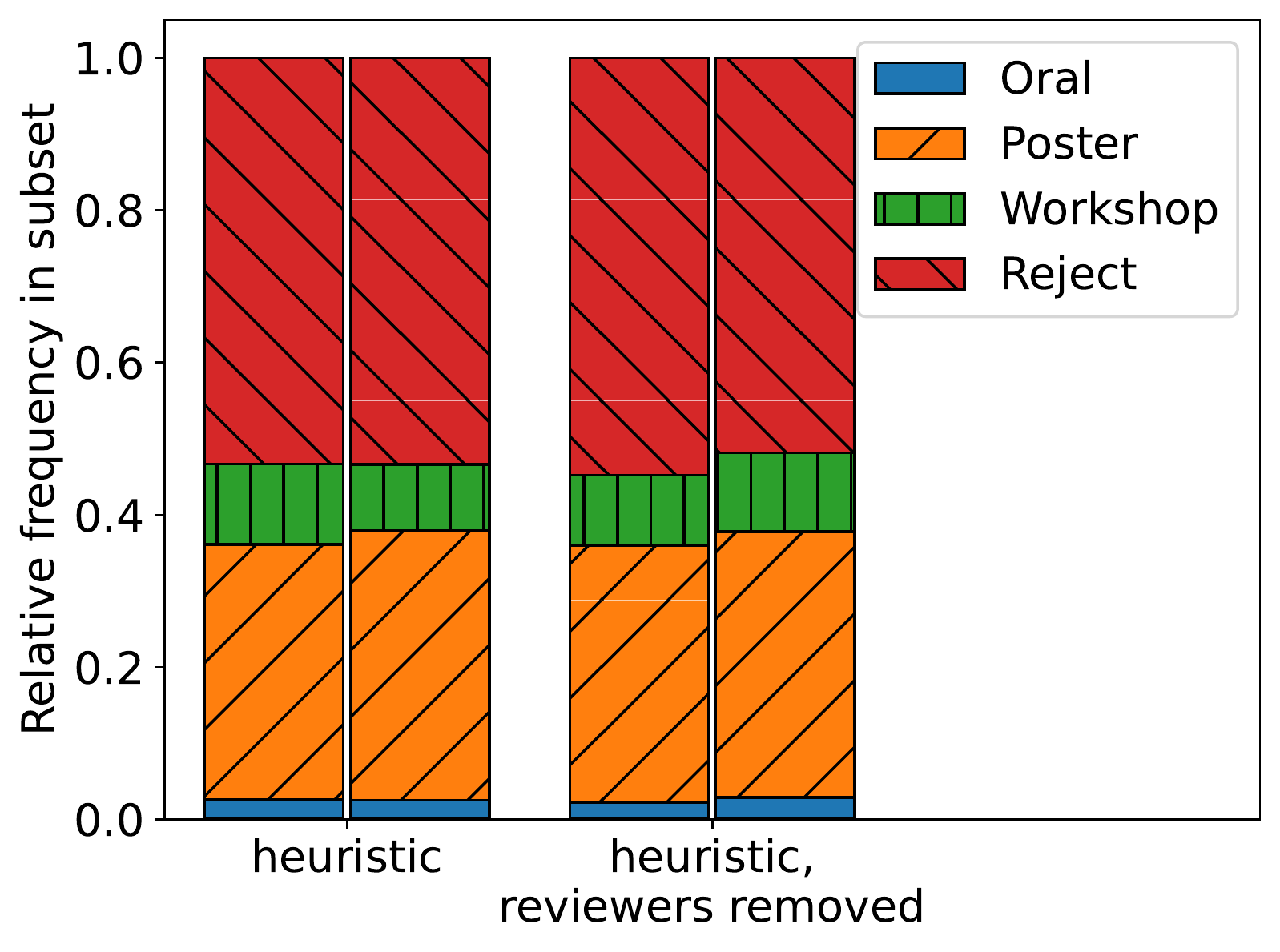}\caption{Partitioned paper decisions}\label{fig:outcome_gen} \end{subfigure} 
    \begin{subfigure}{0.32\textwidth}\includegraphics[width=1\textwidth]{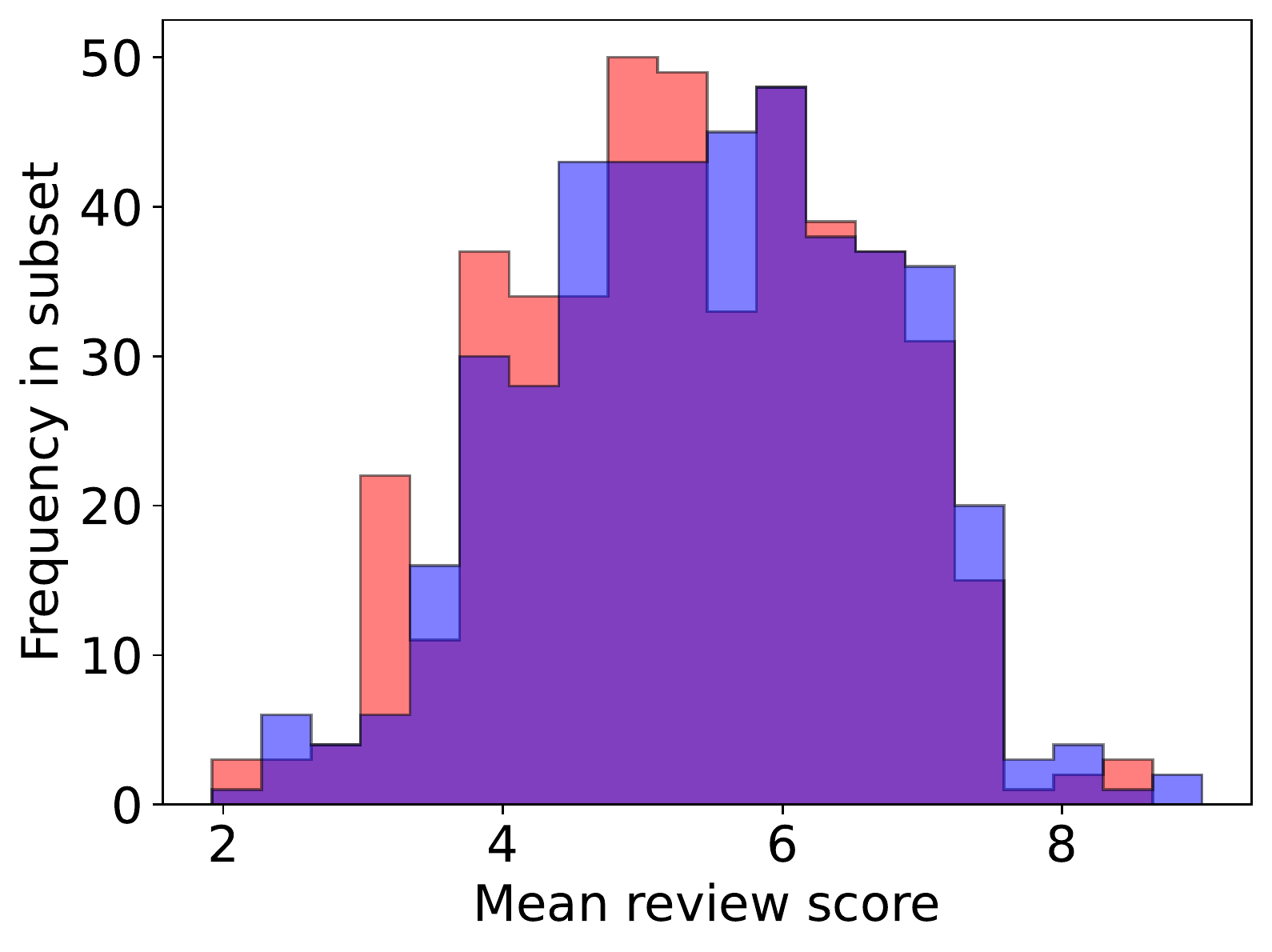}\caption{Partitioned paper scores, heuristic algorithm with reviewers removed}\label{fig:score_gen} \end{subfigure}
    \caption{Experimental results using Algorithm~\ref{algo:gen_auth} on the authorship from ICLR 2018.} \label{fig:results_gen} 
\end{figure*}

Algorithm~\ref{algo:gen_auth} works by taking a problem instance with arbitrary authorship, using it to construct a (fake) problem instance with one-to-one authorship, and running Algorithm~\ref{algo:deg1} on this fake instance to find a partition. Each agent in the fake instance corresponds to a connected component of the authorship graph $\authorship$. Similarities between fake agents are set equal to the total similarity of pairs in the optimal non-strategyproof assignment that are split between the respective components. After construction, we pass this fake instance into Algorithm~\ref{algo:deg1}.

We slightly modify Algorithm~\ref{algo:deg1} to encourage more balanced partitions in this setting before calling it in Line~\ref{ln:alg_k1_call}. In Lines~\ref{ln:add_to_set}-\ref{ln:end_add} of Algorithm~\ref{algo:deg1}, we take the larger of $A$ and $B$ and add it to the smaller of $\reviewers_1$ and $\reviewers_2$ as measured by the total number of papers in the connected components represented within each set. In addition, we iterate through vertices (when finding cycles) in the order of largest connected component to smallest, where size is again determined by the number of papers in each component.

\subsection{Experimental Results}
We test Algorithm~\ref{algo:gen_auth} on the ICLR 2018 dataset, using the full authorship graph from the conference. Following the suggestion in~\cite{xu2018strategyproof}, we also try running Algorithm~\ref{algo:gen_auth} after removing reviewers with a large number of authored papers; this breaks up large connected components in the authorship graph, thus allowing more flexibility in choosing a partition. Specifically, we remove the $53$ reviewers with more than $3$ papers authored ($2.2\%$ of reviewers) from the reviewer pool. 
As a baseline for comparison, we also test $100$ trials of random partitioning, which chooses half of the connected components at random for each subset. We set loads of $\assigndeg_\adpap = 3$ and $\assigndeg_\adrev = 6$, since these are standard conference loads~\cite{xu2018strategyproof}.

First, we see in Figure~\ref{fig:sim_gen} that Algorithm~\ref{algo:gen_auth} outperforms random partitioning in terms of similarity. Our algorithm loses $11.7\%$ of the non-strategyproof optimal similarity, whereas the random partitioning loses $16.8\%$ of optimal on average. When we remove high-authorship reviewers before running Algorithm~\ref{algo:gen_auth}, it only loses $8.9\%$ of the optimal similarity (which is still allowed to use all reviewers). 

Finally, we examine the partition quality in a similar manner as in Section~\ref{sec:exps}.  In Figure~\ref{fig:outcome_gen}, we plot the proportion of papers within each subset of the partitions produced by Algorithm~\ref{algo:gen_auth} that received each decision. We see that the subsets have similar proportions of papers receiving each decision, regardless of whether we remove high-authorship reviewers. However, removing these reviewers results in a significantly more balanced partition: the number of papers differs between subsets by $109$ when high-authorship reviewers are not removed and by only $1$ when they are. In Figure~\ref{fig:score_gen}, we see that the two subsets also have similar distributions of paper scores when high-authorship reviewers are removed.

Our results are highly comparable to those of~\cite{xu2018strategyproof}, who provide a partitioning algorithm that simply returns an arbitrary feasible partition of the connected components of the authorship graph. The authors report that this algorithm loses only $11.4\%$ of the optimal similarity on the ICLR 2018 data with the same loads, a similar performance to our algorithm's despite the fact that our algorithm more carefully chooses the partition. This phenomenon may be related to the results of~\cite{jecmen2022nearoptimal}, who find that randomly splitting reviewers into two ``phases'' of reviewing does not significantly degrade assignment quality on real conference datasets.

\fullver{

\section{Discussion}

We jointly considered two key aspects of the peer-assessment process---strategyproofing and assignment quality---and derived fundamental limits as well as designed computationally-efficient algorithms that achieve these limits. Our theoretical and empirical contributions lead to several directions of future work.

A first key direction of future work is to extend these theoretical results to arbitrary authorship graphs, as in conference peer review. We present a heuristic algorithm with an empirical evaluation in Section~\ref{sec:gen_auth}, but the problem of establishing fundamental limits and optimal algorithms is open. 
Second, most of our work considered worst-case guarantees, while showing that it is NP-hard to attain instance-wise optimality. However, our experimental results showed that our algorithms perform much better than worst-case on real-world instances. This suggests a theoretically interesting and practically useful direction of future work: designing algorithms with approximately-optimal instance-wise guarantees. Third, in contrast to past work, our partitions are non-random. Building on our experimental results revealing that these non-random partitions still result in subsets with roughly equal submission strengths, future work could dig deeper into this phenomenon both theoretically and empirically. Fourth, recent work~\cite{mattei2020peernomination} provides a strategyproof algorithm with theoretical guarantees that does not rely on partitioning. Even though partitioning is by far the dominant way of strategyproofing, it is of interest to extend our results to such strategyproofing methods that may not employ partitioning. 
Finally, there are various other types of strategic or dishonest behavior in peer assessment~\cite{stelmakh2020catch,hvistendahl2013china,ferguson2014publishing,fanelli2009many,resnik2008perceptions,Vijaykumar2020Architecture,littman2021collusion,jecmen2020manipulation,wu2021making} and the design of computational methods to mitigate such behavior is vital. More generally, peer assessment is an important application with a broad set of challenges including subjectivity~\cite{lee2015commensuration,noothigattu2020loss}, miscalibration~\cite{roos2011calibrate,wang2018your}, biases~\cite{tomkins2017reviewer,stelmakh2019testing,manzoor2020uncovering}, and others~\cite{meir2020market,fiez2020super,stelmakh2020resubmissions,wang2020debiasing,shah2021survey}.

\section*{Acknowledgments}
This work was supported by NSF CAREER award 1942124 and a Google Research Scholar Award. We thank Ariel Procaccia for helpful discussions.
}

\ifisarxiv

\bibliographystyle{unsrt}
{\small
\bibliography{bibtex}}

\else 

\bibliography{bibtex} 
\clearpage
\pagebreak
\end{document}
\fi

~\\
\appendix

\noindent{\LARGE \bf Appendices}

\confver{
\section{Fairness Objective} \label{sec:minimax} 
\secfairness
}

\confver{
\section{Omitted Proofs}
In this section we present proofs that could not be included in the main text due to space constraints.

\subsection{Proof of Proposition~\ref{prop:rand}} \label{apdx:rand}
\proofproprand

\subsection{Proof of Theorem~\ref{thm:nph}} }
\fullver{\section{Proof of Theorem~\ref{thm:nph}} } \label{apdx:nph}
We first prove the following supplementary lemma. 
\begin{lemma} \label{lem:graphorient}
Any graph $G = (V, E)$ with maximum degree at most $4$ can be oriented in polynomial time such that the in-degree and out-degree of all vertices are at most $2$. 
\end{lemma}
\begin{proof}
Consider the following procedure, which takes any arbitrary orientation of $G$ and modifies it that it obeys the desired properties. For a vertex $v \in V$, denote the out-degree of $v$ by $\delta_o(v)$ and the in-degree of $v$ by $\delta_i(v)$.

On each iteration, choose $v_o$ such that $\delta_o(v_o) \geq 3$. Consider the set $\mathcal{S}$ of all vertices reachable on a directed path from $v_o$. There cannot be any edges from $\mathcal{S}$ to vertices outside of $\mathcal{S}$, so $\sum_{v \in \mathcal{S}} \delta_o(v) \leq \sum_{v \in \mathcal{S}} \delta_i(v)$. Since $\sum_{v \in \mathcal{S}} \delta_i(v) + \delta_o(v) \leq 4 |\mathcal{S}|$, $\sum_{v \in \mathcal{S}} \delta_o(v) \leq 2 |\mathcal{S}|$. Therefore, there exists $v' \in \mathcal{S}$ such that $\delta_o(v') \leq 1$. Reversing the direction of all edges on the path from $v_o$ to $v'$ reduces $\delta_o(v_o)$ by $1$ and increases $\delta_i(v_o)$ by $1$, increases $\delta_o(v')$ by $1$ and decreases $\delta_i(v')$ by $1$, and does not change the in- or out-degree of any other vertices. Since $\delta_o(v') \leq 1$ and $\delta_i(v_o) \leq 1$ originally, this change does not cause any additional constraints to be violated. Therefore, this step can be repeated until all vertices satisfy $\delta_o(v) \leq 2$. 
Reversing the direction of all edges and repeating the entire procedure ensures that all vertices satisfy $\delta_i(v) \leq 2$ as well.

Each iteration takes $O(|V|)$ time to find a path to an appropriate $v'$. The number of iterations is $O(|V|)$, since each vertex is $v_o$ at most twice.
\end{proof}

We now prove the main result, showing that it is unlikely that an instance-optimal algorithm for \strategyproof{} assignment exists if $\assigndeg \geq 2$.
We reduce from the ``Simple Max Cut on Cubic Graphs'' problem, which is NP-complete~\cite{yannakakis1978node}. An instance of this problem consists of an unweighted, undirected graph $G = (V, E)$ where all vertices have degree $3$ and an integer $K$. The question is: is there a partition of $V$ that cuts at least $K$ edges? 

Fix any $\assigndeg \geq 2$. We reduce this problem to a decision variant of our problem, defined as follows. Given agents $\reviewers$, submissions $\papers$, similarities $\simwhole$, and a number $x$, does there exist a \strategyproof{} assignment with loads of $\assigndeg$ such that the total similarity is at least $x$?
If it is NP-hard to determine if there is a \strategyproof{} assignment with similarity at least $x$, finding the \strategyproof{} assignment with maximum similarity must also be NP-hard.

Consider any instance of the max cut problem $G = (V, E)$, $K$. Construct a graph $G'$ by adding $|V|$ disconnected vertices to $G$. By Lemma~\ref{lem:graphorient}, we can find an orientation of $G'$ in polynomial time to get a directed graph $\hat{G} = (\hat{V}, \hat{E})$ such that all out-degrees and in-degrees are at most $2$. For each vertex $v_i \in V$, construct one agent $\adrev_i$ and their submission $\adpap_i$. For each directed edge $(v_i, v_j) \in \hat{E}$, set similarity $S_{ij} = 1$; set all other similarities to $0$. Set $x = K$.

Suppose that $V$ has a partition $(V_1, V_2)$ that cuts at least $K$ edges. Add the disconnected vertices to the subsets so that $|V_1| = |V_2| = |V|$. Partition the corresponding agents in the same way. Assign the agent $\adrev_i$ to submission $\adpap_j$ for each directed edge $(v_i, v_j) \in \hat{E}$ cut by the partition. Each agent has an edge to at most $2$ submissions and each submission has an edge to at most $2$ agents, so these can all be assigned since $k \geq 2$. Assign the remaining agents and submissions arbitrarily, which can be done since the partitions are balanced. This assignment has similarity at least $x$ and is \strategyproof{}.

Suppose that $V$ does not have a partition that cuts at least $K$ edges. This means that there does not exist a partition of agents such that at least $x$ agent-submission pairs with non-zero similarity can be assigned respecting the partition. 

\confver{
\subsection{Proof of Theorem~\ref{thm:kpart}} \label{apdx:kpart}
Each vertex in $\dirsimilarity$ has in-degree and out-degree $\assigndeg$, so the maximum (total) degree is at most $2\assigndeg$. Therefore, by Theorem~\ref{thm:hajnal} we can find an equitable $(2\assigndeg + 1)$-coloring of $\dirsimilarity$ in $O(\numrev^2)$ time. By Definition~\ref{def:eq_color}, the entirety of $\optassign_\simwhole$ respects the partition induced by the coloring and so is \strategyproof{} with respect to this partition. Also by Definition~\ref{def:eq_color}, all color classes differ in size by at most $1$.
}

\begin{figure*}[t!] 
    \centering
    \begin{subfigure}{0.32\textwidth}\includegraphics[width=1\textwidth]{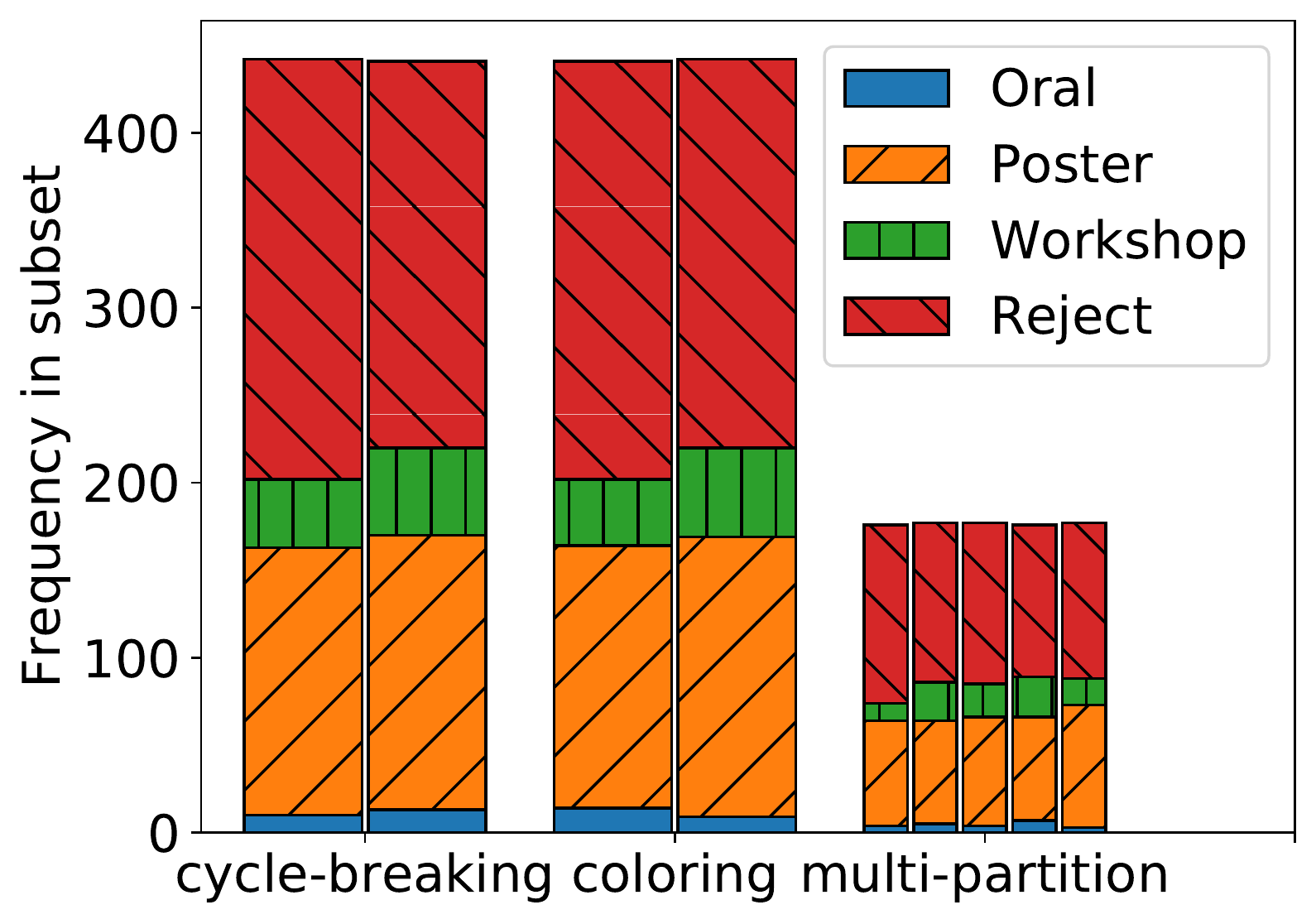}\caption{Partitioned paper decisions, \\ $\assigndeg=2$}\label{fig:outcome2} \end{subfigure} 
    \begin{subfigure}{0.32\textwidth}\includegraphics[width=1\textwidth]{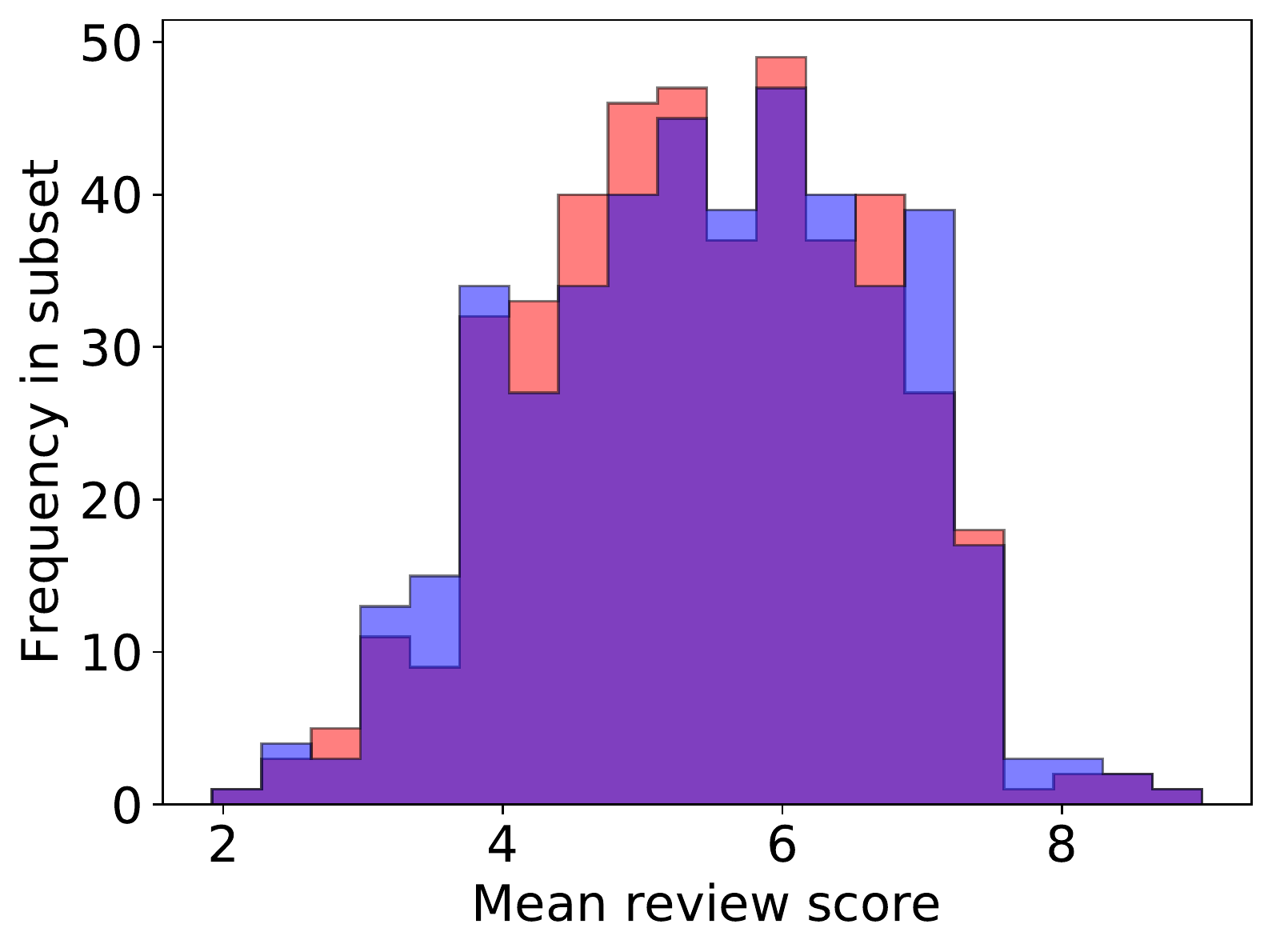}\caption{Partitioned paper scores for \\ the cycle-breaking algorithm, $\assigndeg=2$}\label{fig:score_cycle2} \end{subfigure}
    \begin{subfigure}{0.32\textwidth}\includegraphics[width=1\textwidth]{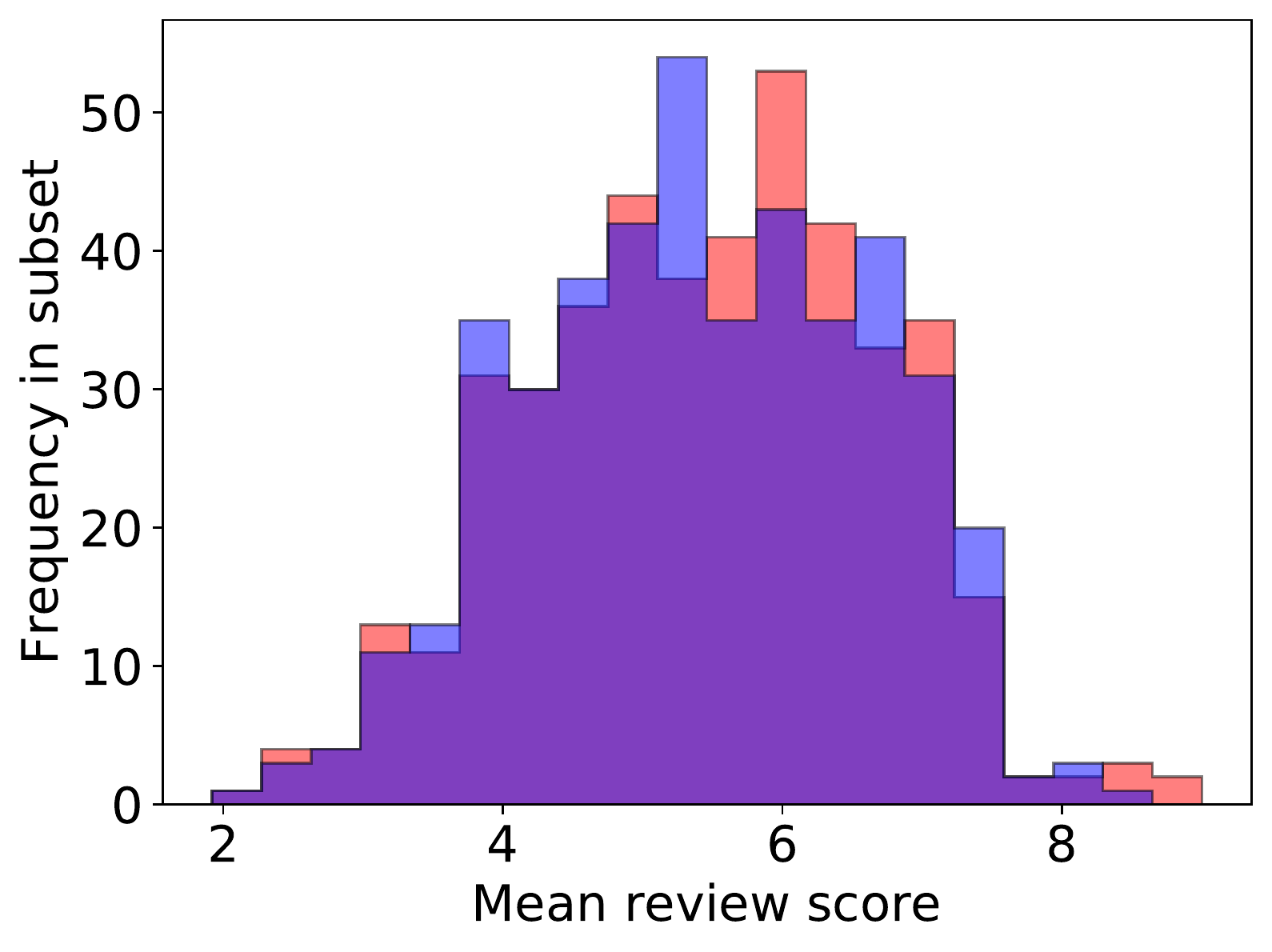}\caption{Partitioned paper scores for \\ the coloring algorithm, $\assigndeg=2$}\label{fig:score_color2} \end{subfigure}  \\
    \begin{subfigure}{0.32\textwidth}\includegraphics[width=1\textwidth]{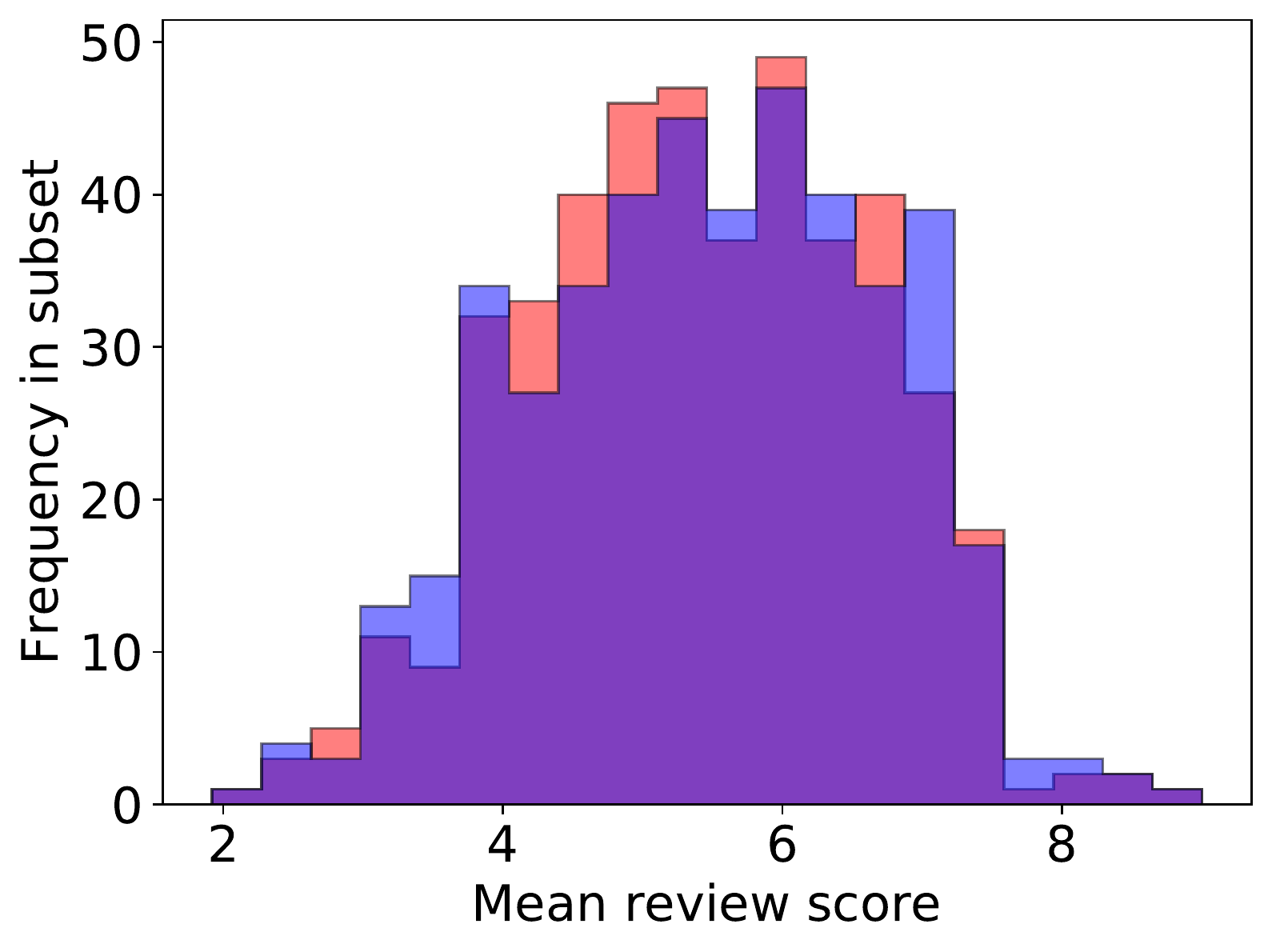}\caption{Partitioned paper scores for \\ the cycle-breaking algorithm, $\assigndeg=3$}\label{fig:score_cycle3} \end{subfigure}
    \begin{subfigure}{0.32\textwidth}\includegraphics[width=1\textwidth]{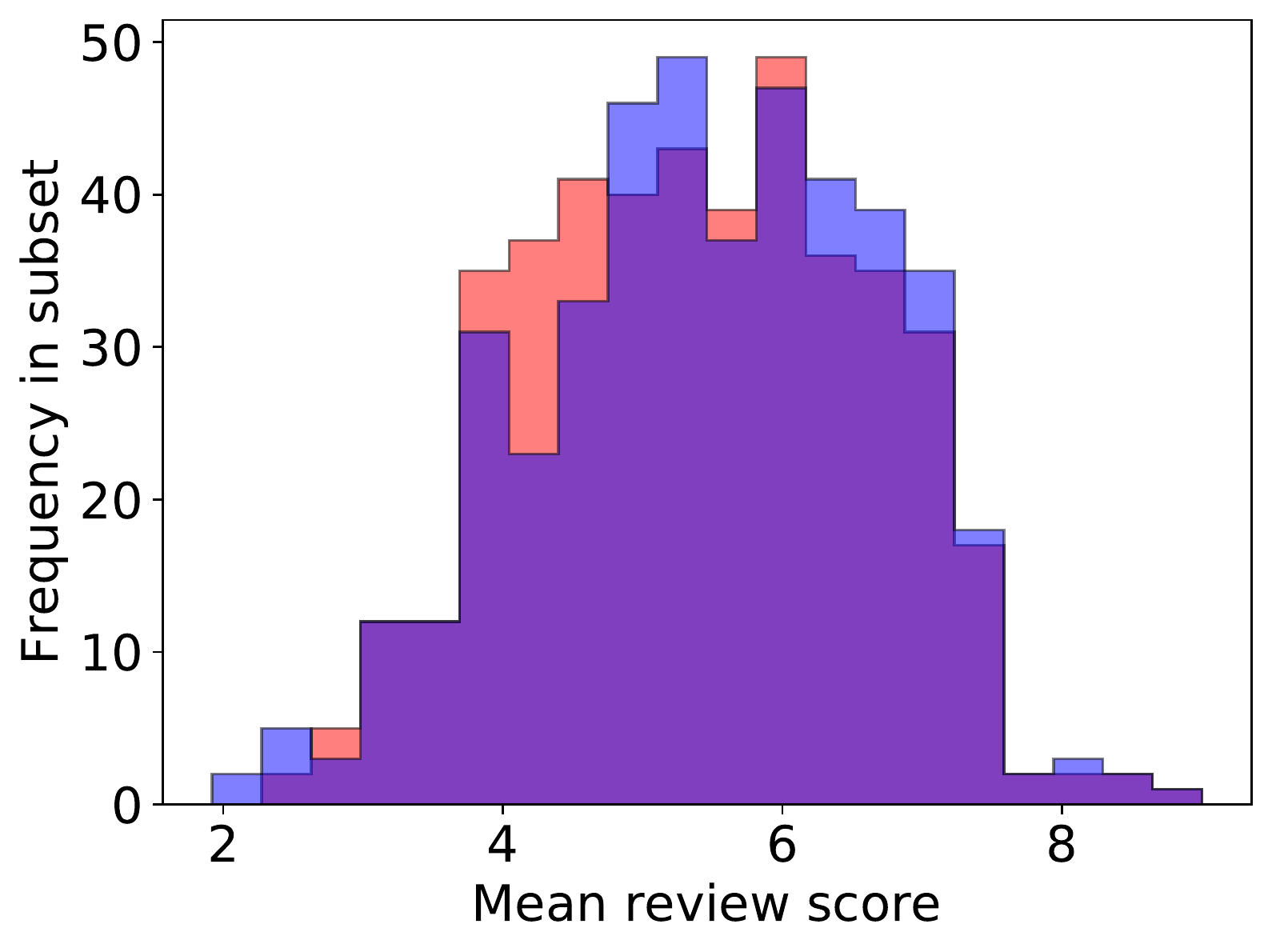}\caption{Partitioned paper scores for \\ the coloring algorithm, $\assigndeg=3$}\label{fig:score_color3} \end{subfigure} 
    \caption{Additional experimental results on data from ICLR 2018.} \label{fig:results2} 
\end{figure*}

\section{Additional Experimental Results} \label{apdx:exps}
In this section, we display additional experimental results regarding the partition quality of our algorithms on the data from ICLR 2018 (introduced in Section~\ref{sec:exps}). 

In Figure~\ref{fig:outcome2}, we display the number of papers receiving each decision in each subset of the partitions for $\assigndeg=2$, where each bar displays the decisions for the papers in one subset of the partition. The partitions constructed by all algorithms have very similar numbers of papers receiving each decision in each subset.

In Figures~\ref{fig:score_cycle2}-\ref{fig:score_color3}, we show the mean review scores given to each paper for the cases of $\assigndeg=2$ and $\assigndeg=3$. As before, the red and blue histograms correspond to the scores given to the papers in each subset of the algorithm's partition, with the purple section indicating their overlap. For all algorithms, the distribution of scores appear very similar across subsets of the partition. Results for the multi-partition algorithm are not shown, as there are too many subsets for the histogram to be readable.

\end{document}